\author{
Pavel Naumov$^1$
\and
Oliver Orejola$^2$
\affiliations
$^1$University of Southampton, United Kingdom\\
$^2$Tulane University, United States
\emails
p.naumov@soton.ac.uk,
oorejola@tulane.edu
}
\newtheorem{theorem}{Theorem}
\newtheorem{lemma}{Lemma}
\newtheorem*{claim}{\textnormal{\em Claim}}
\newtheorem{definition}{Definition}
\renewcommand{\phi}{\varphi}
\renewcommand{\epsilon}{\varepsilon}
\newenvironment{proof-of-claim}{\noindent{\em Proof of Claim.}}{\hfill $\boxtimes\hspace{2mm}$\linebreak}
\newcommand{\K}{{\sf K}}
\renewcommand{\H}{{\Box}}
\newcommand{\HH}{{\sf H}}
\renewcommand{\S}{{\sf S}}
\title{Clandestine Strategies in Games with Imperfect Information}
\title{Logic of Clandestine Operations}
\title{Strategic Coalitions in Clandestine Games}
\title{Shhh! The Logic of Clandestine Operations}
\date{February 2020}
\begin{document}

\maketitle

\begin{abstract}
An operation is called covert if it conceals the identity of the actor; it is called clandestine if the very fact that the operation is conducted is concealed. The paper proposes a formal semantics of clandestine operations and introduces a sound and complete logical system that describes the interplay between the distributed knowledge modality and a modality capturing coalition power to conduct clandestine operations.
\end{abstract}

\section{Clandestine Games}

In this paper, we study games in which coalitions can engage in concealed operations. The US Department of Defense Dictionary of Military and Associated Terms distinguishes between covert and clandestine operations. Covert operations are planned and executed to conceal the identity of the actor. An operation is clandestine when the very fact that the operation is conducted is concealed~\cite{20dod}. Thus, every clandestine operation is covert, but not every covert operation is clandestine. The focus of the current work is on clandestine operations. 

\begin{figure}[ht]
\begin{center}
\scalebox{0.45}{\includegraphics{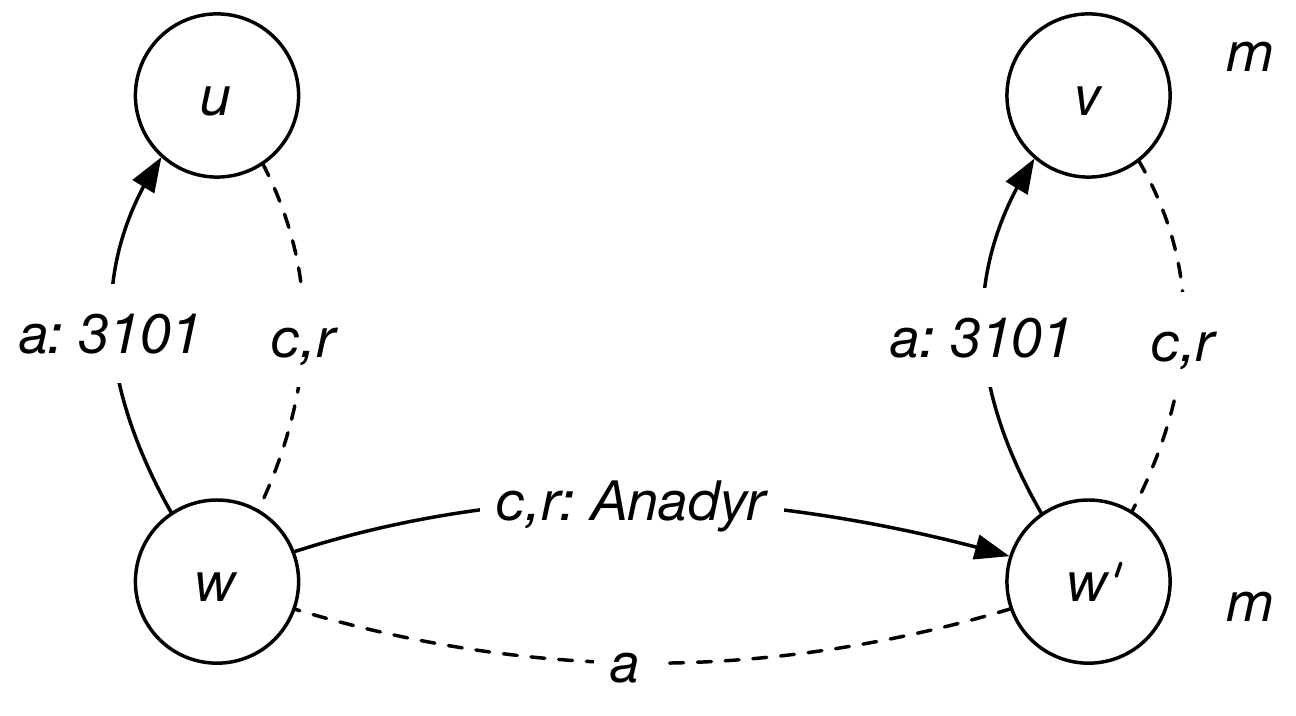}}
\caption{Cuban Missile Crisis Game.}\label{cuban figure}
\end{center}
\end{figure}
An example of a clandestine operation is 1962 Operation Anadyr conducted by the Soviet Union arm forces as a prelude to the Cuban Missile Crisis~\cite{h07cia}. The operation consisted of the delivery and deployment of ballistic missiles with nuclear warheads in Cuba to prevent an invasion of the island by the United States. Figure~\ref{cuban figure} depicts our representation of the Cuban Missile Crisis as a {\em clandestine  game} between three players: the Americans ($a$), the Cubans ($c$), and the Russians ($r$). Operation Anadyr was executed by the Cubans and the Russians and consisted in transitioning the world from state $w$ to state $w'$. Propositional variable $m$ denotes the statement ``Missiles are deployed in Cuba''. It is false in state $w$ and true in state $w'$. Operation Anadyr was {\em concealed} in the sense that the Americans were not able to detect the transition of the world from state $w$ to state $w'$. In the diagram, the indistinguishability of these two states to Americans is shown using a dashed line. 

Although states $w$ and $w'$ are indistinguishable to Americans, this does not prevent them from discovering the transition from state $w$ to state $w'$ by executing an operation of their own. In fact, they did just that on October 14th, 1962, by conducting a clandestine operation Mission 3101~\cite{m92cia}. Mission 3101 consisted of a U-2 spy plane secretly flying over Cuban territory to collect military intelligence. Mission 3101 also was concealed in the sense that, as shown in the diagram, the Cubans and the Russians were not able to detect its execution that transitioned the world from state $w'$ to state $v$. If the same Mission 3101 were to be executed in state $w$, it would hypothetically transition the world from state $w$ to state $u$. The Americans can distinguish state $v$ from state $u$ based on the reconnaissance photos taken by the spy plane. This explains how the Americans were able to detect the execution of Operation Anadyr through operation Mission 3101.

Coalition power in games with imperfect information has been studied before in {\em synchronous} settings where all agents act at once and, thus, everyone is aware that something happened~\cite{vw03sl,aa12aamas,nt17aamas,nt18ai,nt18aaai,aa16jlc}. To capture clandestine operations it is crucial to use semantics in which an agent might be unaware of the game transitioning from one state to another as a result of the actions of other agents. Such a behaviour could be modelled, for example, by extending the semantics of the above logical systems with a single $sleep$ action. Additionally, it should be required that any agent executing action $sleep$ should not be able to distinguish the initial and the final state of any transition during which the agent used $sleep$. This approach would also need to settle who learns what if two or more disjoint coalitions execute clandestine operations synchronously.

For the sake of the clarity of presentation, in this paper, we define the semantics of clandestine operations in terms of a class of asynchronous games that we call {\em clandestine games} that are described in the definition below. 

In this paper, we will assume a fixed set of agents $\mathcal{A}$. By a coalition we mean any (possibly empty) subset of $\mathcal{A}$. For any coalition $C$, by $\overline{C}$ we denote the complement of set $C$ with respect to set $\mathcal{A}$.

\begin{definition}\label{game}
Let a {\bf\em clandestine game} be any such tuple $(W,\{\sim_a\}_{a\in \mathcal{A}},\Delta,M,\pi)$ that
\begin{enumerate}
    \item $W$ is a set of ``states''.
    \item $\sim_a$ is an ``indistinguishability'' equivalence relation on set $W$ for each agent $a\in \mathcal{A}$. We write $w\sim_C u$ if $w\sim_a u$ for each agent $a\in C$.
    \item $\Delta$ is a nonempty set of ``operations''.
    \item $M$ is a set of tuples $(w,C,\delta,u)$, where $w,u\in W$ are states, $C\subseteq\mathcal{A}$ is a coalition, and $\delta\in\Delta$ is an operation. It is assumed that set $M$, called ``mechanism'', satisfies the following two conditions
    \begin{enumerate}
    \item {\bf\em concealment}: for any two states $w,u\in W$, any coalition of agents $C\subseteq\mathcal{A}$, any operation $\delta\in\Delta$, if $(w,C,\delta,u)\in M$, then $w\sim_{\overline{C}} u$,
    \item {\bf\em nontermination}: for any state $w\in W$, any coalition of agents $C\subseteq\mathcal{A}$, and any operation $\delta\in\Delta$, there is at least one state $u\in W$ such that $(w,C,\delta,u)\in M$.
    \end{enumerate}
    \item $\pi(p)$ is a subset of $W$ for each propositional variable $p$.
\end{enumerate}
\end{definition}
The diagram in Figure~\ref{cuban figure} depicts an example of a clandestine game with four states ($w$, $w'$, $u$, and $v$) and two operations (Anadyr and 3101). The indistinguishability relations are shown by dashed lines and the mechanism is depicted by directed lines. The diagram omits loop operations. This means, for example, that if {\em the Americans} execute Operation Anadyr in any of the states, then the game transitions back to the same state. The nontermination condition 4(b) guarantees that no operation can terminate a game without reaching some state.

In a real-world setting, a variety of operations might be performed by any coalition. Some of them satisfy the concealment condition 4(a) of Definition~\ref{game}, the others might not. We excluded non-concealed operations from our games to keep the presentation simple. If such operations are added to the models and the quantifier over operations $\delta$ in item 5 of Definition~\ref{sat} below is simultaneously restricted to concealed operations only, then the soundness and the completeness results of this paper will remain true and no changes to their proofs will be necessary.

In this paper, we propose a sound and complete logical system for reasoning about coalition power to conduct clandestine operations. The rest of the paper is organized as follows. In the next section, we discuss the interplay between knowledge and actions and explain why existing coalition power modalities do not capture the properties of clandestine operations. Then, we define the syntax and semantics of our logical system. In the section Coalition-Informant-Adversary Principle, we introduce and discuss the most non-trivial axiom of our system.  In the section that follows, we list the remainder of the axioms. After that, we sketch the completeness of our logical system. The proof of soundness and some details of the completeness are in the appendix.

\section{Knowledge and Actions}

In this section, we discuss how different forms of knowledge can be captured in the existing modal logics for reasoning about coalition power and explain why the power to perform a clandestine operation is not expressible in these logics.

When discussing the interplay between knowledge and actions, it is common to distinguish {\em ex-ante}, {\em interim}, and {\em ex-post} knowledge. They refer to an agent's (or a coalition's) knowledge before the action, at the moment of the action, and after the action, respectively. One of the first logical systems describing the interplay between distributed knowledge modality $\K_C$ and coalition power modality $\S_C$ was introduced in~\cite{aa12aamas}. Using their language, one can write $\K_C\S_C\phi$ to state that coalition $C$ knows {\em ex-ante} (before the action) that it has a strategy (joint action) to achieve $\phi$. Using the same language, one can write $\S_C\K_C\phi$ to state that coalition $C$ has a strategy that would result in $\phi$ being known {\em ex-post} to the coalition. The language of~\cite{aa12aamas} cannot be used to express {\em interim} knowledge. However, this could be done using ``seeing to it'' modality~\cite{bp90krdr,h01,h95jpl,hp17rsl,ow16sl}.

Knowing that a strategy exists, as in $\K_C\S_C\phi$, is different from actually knowing the strategy. If a coalition $C$ knows {\em ex-ante} what strategy it can use to achieve $\phi$, then we say that the coalition has a {\em know-how} strategy to achieve $\phi$ and denote this by $\HH_C\phi$. Unless the coalition has a perfect recall, knowing {\em ex-ante} a strategy to achieve $\phi$ does not imply knowing ex-ante a strategy that results in knowing {ex-post} that $\phi$ is achieved. The latter, however, could be expressed as $\HH_C\K_C\phi$. The interplay between coalitional know-how modality $\HH_C$ and distributed knowledge modality $\K_C$ has been studied in~\cite{nt17aamas,fhlw17ijcai,nt18ai,nt18aaai,aa16jlc,cn20ai}. 

\begin{figure}[ht]
\begin{center}
\scalebox{0.45}{\includegraphics{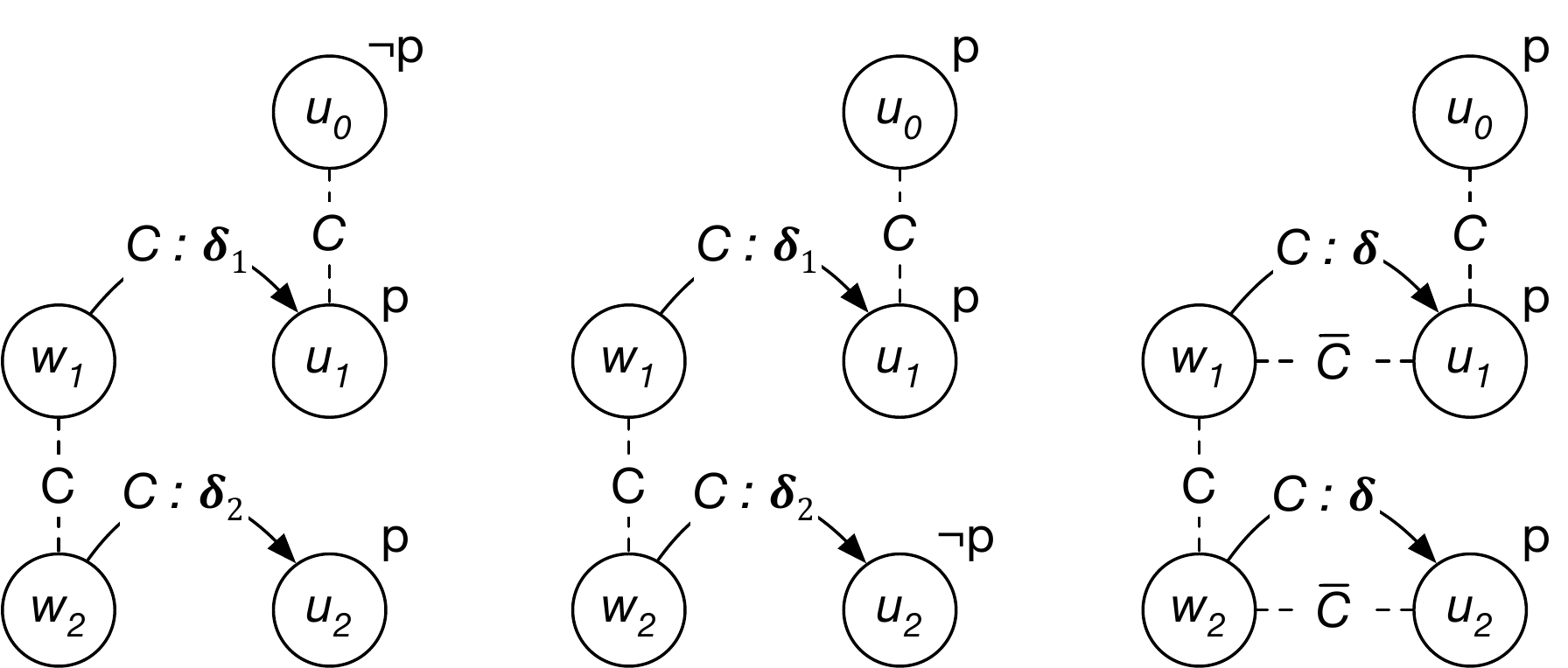}}
\caption{Knowledge and Actions.}\label{knowledge action figure}
\end{center}
\end{figure}
In epistemic models, knowledge is usually captured by an indistinguishably relation between states. For example, in Figure~\ref{knowledge action figure} (left), $w_1\Vdash \K_C\S_C p$. In other words, coalition $C$ knows ex-ante that it has a strategy to achieve $p$. This is true because the coalition has such a strategy not only in state $w_1$, but also in state $w_2$, indistinguishable to the coalition from $w_1$. Note that this is not a know-how strategy because the required strategy in state $w_1$ (strategy $\delta_1$) is different from the required  strategy in state $w_2$ (strategy $\delta_2$). Thus, $w_1\Vdash \neg\HH_C p$. Note also that in state $w_1$ coalition $C$ does not have a strategy to achieve ex-post knowledge of $p$. We write this as  $w_1\Vdash \neg\S_C\K_C p$. This is true because state $u_1$ is indistinguishable from state $u_0$ where $p$ is not satisfied.

The situation is different in Figure~\ref{knowledge action figure} (centre). Here, coalition $C$ has a strategy in state $w_1$ to achieve $p$, but the coalition does not know this ex-ante because it cannot distinguish state $w_1$ from state $w_2$ where such a strategy does not exist. Using our notations, $w_1\Vdash \S_C p$ and $w_1\Vdash \neg\K_C\S_C p$. Note, however, that in this setting coalition also has a strategy to achieve ex-post knowledge of $p$ because $p$ is satisfied not only in state $u_1$ but also in state $u_0$, indistinguishable to $C$ from state $u_1$. We write this as $w_1\Vdash \S_C\K_C p$

The clandestine operations that we consider in this paper are know-how strategies. Furthermore, for the reason we discuss in the next section, they are know-how strategies to achieve ex-post knowledge. This alone would not require a new modality because it can be captured in existing know-how logics as $\HH_C\K_C\phi$. However, the last formula does not account for the concealed  nature of clandestine operations. We capture the late by requiring the initial and the final state of the operation to be indistinguishable to the {\em complement} $\overline{C}$ of coalition $C$. Strategy $\delta$ depicted in Figure~\ref{knowledge action figure} (right) is a clandestine operation of coalition $C$ to achieve $p$. In this paper, we introduce a new modality $\Box_C\phi$ to denote an existence of a clandestine operation of coalition $C$ to achieve $\phi$. This modality is not definable through existing modalities of coalition power, know-how, and seeing-to-it, because these existing modalities cannot capture the indistinguishably (by the complement of coalition $C$) of the initial and the final state of the operation.

\section{Syntax and Semantics}

Language $\Phi$ of our logical system is defined by the grammar

$$
\phi := p\;|\;\neg\phi\;|\;\phi\to\phi\;|\;\K_C\phi\;|\;\H_C\phi, 
$$
where $p$ is a propositional variable and $C$ is a coalition. We read formula $\K_C\phi$ as ``coalition $C$ knows $\phi$'', and formula $\Box_C\phi$ as ``coalition $C$ knows which clandestine operation it can execute to achieve $\phi$''. In both cases, the knowledge is distributed. We assume that Boolean constants $\top$ and $\bot$ as well as disjunction $\vee$ are defined in the standard way. We use $\K_{C,D}\phi$ and $\Box_{C,D}\phi$ as shorthand for $\K_{C\cup D}\phi$ and $\Box_{C\cup D}\phi$ respectively.

In the definition below, item~5  gives formal semantics of modality $\H_C\phi$, see Figure~\ref{satisfiability figure}. 

\begin{definition}\label{sat}
For any state $w\in W$ of a clandestine game $(W,\{\sim_a\}_{a\in \mathcal{A}},\Delta,M,\pi)$ and any formula $\phi\in\Phi$, satisfiability relation $w\Vdash\phi$ is defined recursively as
\begin{enumerate} 
    \item $w\Vdash p$ if $w\in \pi(p)$,
    \item $w\Vdash \neg\phi$ if $w\nVdash \phi$,
    \item $w\Vdash\phi\to\psi$ if $w\nVdash\phi$ or $w\Vdash\psi$,
    \item $w\Vdash\K_C\phi$ if $u\Vdash\phi$ for any $u\in W$ such that $w\sim_C u$,
    \item $w\Vdash \H_C\phi$ if there is a nonempty coalition $C'\subseteq C$ and  an operation $\delta\in \Delta$ such that for any states $w',u,u'\in W$, if $w\sim_C w'$, $(w',C',\delta,u)\in M$, and $u\sim_C u'$, then $u'\Vdash \phi$.
\end{enumerate}
\end{definition}

\begin{figure}[ht]
\begin{center}
\scalebox{0.45}{\includegraphics{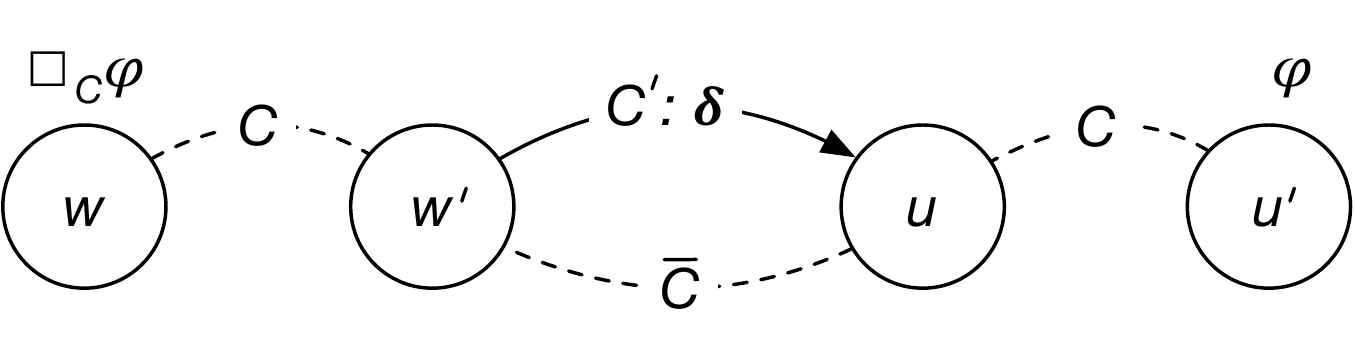}}
\caption{Towards item 5 of Definition~\ref{sat}.}\label{satisfiability figure}
\end{center}
\end{figure}
In item 5 of the above definition, we introduce coalition $C'$ to capture the fact that in order for a coalition $C$ to know a clandestine operation to achieve a certain goal, not all members of the coalition $C$ have to take an active part in it. 

Recall that Definition~\ref{game} allows for some operations to be conducted by the empty coalition. Such operations can change the state of the game. However, according to the concealment condition of Definition~\ref{game}, such change is not noticeable to any agent in the game. Informally, these operations could be thought of as nondeterministic transitions of the game that occur independently from the actions of the agents and are not noticeable to them. The presence of such transitions is not significant for the results in this paper. We do not exclude them for the sake of generality. At the same time, in Definition~\ref{sat}, we require coalition $C'$ to be nonempty. Intuitively, a coalition can ask some of its members to conduct an operation, but the coalition cannot ask the empty coalition, because operations of the empty coalition are system transitions not controlled by the agents. The restriction of $C'$ to nonempty coalitions is significant for our results. 

Item 5 of Definition~\ref{sat} is using state $w'$ to express that the clandestine operation $\delta$ not only exists but it is known to coalition $C$. Note that this knowledge, captured through statement $w\sim_C w'$, is the knowledge of the whole coalition $C$, not just its part $C'$ that executes the operation. In other words, we assume that some members of the coalition $C$ could be passive {\em informants}. We explore this in the Coalition-Informant-Adversary axiom of our logical system. 

Formula $\H_C\phi$ states that coalition $C$ knows a clandestine operation to achieve $\phi$. Because clandestine games are asynchronous, an important question is for how long $\phi$ will remain true after the operation. If another coalition can ``undo'' the operation without $C$ even noticing, then coalition $C$ could only be sure that $\phi$ holds at the very moment the operation is completed. To avoid this, in item 5 of Definition~\ref{sat}, we require $\phi$ to be satisfied not only in the completion state $u$ of the operation $\delta$, but also in all states $u'$ indistinguishable from state $u$ by coalition $C$. In other words, statement $\phi$ remains true until at least one of the members of coalition $C$ takes part in another clandestine operation\footnote{If non-concealed operations are added to Definition~\ref{game} as described in the previous section, then $\phi$ will remain until at least one of the members of coalition $C$ becomes aware that another operation took place.}.

\section{Coalition-Informant-Adversary Principle}

The most interesting axiom of our logical system is a principle that captures strategic information dynamics between three sets of agents: a {\em coalition} that conducts a clandestine operation, a group of {\em informants} who passively cooperate with the coalition by sharing knowledge but do not participate in the operation itself, and a group of {\em adversaries} who do not cooperate with the coalition at all. To understand this principle, let us first consider its simplified form without the adversaries: for any disjoint coalitions $C$ and $I$,
\begin{equation}\label{baby CIA principle}
    \K_{I}(\K_C\phi\to\K_{C}\psi)\to(\H_C\phi\to\H_{C,I}\psi).
\end{equation}
The assumption $\H_C\phi$ of this principle states that before the operation ({\em ex-ante}) the coalition knows which clandestine operation it should conduct in order to know after the operation ({\em ex-post}) that $\phi$ is true. The other assumption $\K_{I}(\K_C\phi\to\K_{C}\psi)$ of this principle refers to {\em ex-ante} knowledge of a group of informants $I$. Because the operation is clandestine and $C\cap I=\varnothing$, the  {\em ex-ante} and {\em ex-post} knowledge of $I$ is the same. Thus, statement $\K_C\phi\to\K_{C}\psi$ will have to be true after the operation. In other words, after the operation coalition, $C$ will know that not only condition $\phi$, but also condition $\psi$ is true. Coalition $C$ alone, however, does not know this {\em ex-ante} and thus, it alone does not know an operation to achieve condition $\psi$. Nevertheless, recall that coalition $I$ knows $\K_C\phi\to\K_{C}\psi$ {\em ex ante}. Thus, it knows {\em ex-ante} that $\K_C\phi\to\K_{C}\psi$ will have to be true after any clandestine operation that does not involve $I$. Therefore, the union of the coalitions $C$ and $I$ knows {\em ex-ante} the operation that $C$ can conduct to achieve $\psi$. That is, $\H_{C,I}\psi$. 

Note that the purpose of modality $\K_I$ in the assumption $\K_{I}(\K_C\phi\to\K_{C}\psi)$ of principle~(\ref{baby CIA principle}) is to make sure that statement $\K_C\phi\to\K_{C}\psi$ is preserved during the clandestine operation of coalition $C$. If one were to consider an additional coalition, that we refer to as an adversary coalition $A$, then replacing modality $\K_I$ with $\K_{A,I}$ still guarantees that statement $\K_C\phi\to\K_{C}\psi$ is preserved during the operation (as long as $A$ is also disjoint with $C$). Thus, one might think that the following form of principle~(\ref{baby CIA principle}) is also valid:
$$
\K_{A,I}(\K_C\phi\to\K_{C}\psi)\to(\H_C\phi\to\H_{C,I}\psi).
$$
This statement, however, is not true. Assumptions $\K_{A,I}(\K_C\phi\to\K_{C}\psi)$ and $\H_C\phi$ can only guarantee that coalition $C$ knows {\em ex ante} an operation to achieve $\phi$. If this operation is executed, then coalition $C\cup I$ will know {\em ex-post} that $\phi$ is true, but {\em they might not know ex-ante that they will know this ex-post}. To make sure that they indeed have such {\em ex-ante} knowledge, one more knowledge modality should be added to the formula:
$$
\K_{C,I}\K_{A,I}(\K_C\phi\to\K_{C}\psi)\to(\H_C\phi\to\H_{C,I}\psi).
$$
Finally, note that if instead of preserving $\K_C\phi\to\K_{C}\psi$, it is enough just to be able to preserve  statement $\K_C\phi\to\K_{C,I}\psi$:
$$
\K_{C,I}\K_{A,I}(\K_C\phi\to\K_{C,I}\psi)\to(\H_C\phi\to\H_{C,I}\psi).
$$
As it turns out, the above formula is the final and the most general form of the Coalition-Informant-Adversary principle. In this paper, we show that this principle, in combination with several much more straightforward other axioms, forms a logical system that can derive all universally valid properties of clandestine operations.


















\section{Axioms}

In addition to propositional tautologies in the language $\Phi$, our logical system contains the following axioms:

\begin{enumerate}
    \item Truth: $\K_C\phi\to\phi$,
    \item Negative Introspection: $\neg\K_C\phi\to\K_C\neg\K_C\phi$,
    \item Distributivity: $\K_C(\phi\to\psi)\to(\K_C\phi\to\K_C\psi)$,
    \item Monotonicity: $\K_{C'}\phi\to\K_C\phi$, where $C'\subseteq C$,
    \item Strategic Introspection: $\H_C\phi\to\K_C\H_C\phi$,
   \item Coalition-Informant-Adversary: if $C\cap (I\cup A)=\varnothing$, then\\ $\K_{C,I}\K_{A,I}(\K_C\phi\to\K_{C,I}\psi)\to(\H_C\phi\to\H_{C,I}\psi)$,
   \item Nontermination: $\neg\H_C\bot$,
   \item Empty Coalition: $\neg\H_\varnothing\phi$.
\end{enumerate}

We write $\vdash\phi$ if a formula $\phi$ is provable from the above axioms using the Modus Ponens and the two Necessitation inference rules:
$$
\dfrac{\phi,\phi\to\psi}{\psi}
\hspace{8mm}
\dfrac{\phi}{\K_C\phi}
\hspace{8mm}
\dfrac{\phi,\;\;\;\;\; C\neq\varnothing}{\H_C\phi}.
$$
We write $X\vdash\phi$ if the formula $\phi$ is provable from the theorems of our logical system and the set of additional axioms $X$ using only the Modus Ponens inference rule.
The next two lemmas state well-known properties of S5 modality  $\K$. 

The next five lemmas are used in the proof of the completeness of our logical system. We give their proofs in the appendix.

\begin{lemma}\label{super distributivity}
If $\phi_1,...,\phi_n\!\vdash\!\psi$, then $\K_C\phi_1,...,\K_C\phi_n\!\vdash\!\K_C\psi$.
\end{lemma}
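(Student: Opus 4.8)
The plan is to reduce the claim to the Distributivity axiom together with the $\K_C$-Necessitation rule, using the deduction theorem and an induction on $n$. First I would note that, since $X\vdash\phi$ is defined as derivability from the theorems of the logical system and the set $X$ using \emph{only} Modus Ponens, the deduction theorem is available: $X,\alpha\vdash\beta$ if and only if $X\vdash\alpha\to\beta$ (the standard induction on the length of a derivation goes through precisely because Necessitation is not used inside such derivations). Applying the deduction theorem $n$ times to the hypothesis $\phi_1,\dots,\phi_n\vdash\psi$ yields $\vdash\phi_1\to(\phi_2\to\cdots\to(\phi_n\to\psi)\cdots)$.

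Next I would apply the $\K_C$-Necessitation rule to this theorem to obtain $\vdash\K_C(\phi_1\to(\phi_2\to\cdots\to(\phi_n\to\psi)\cdots))$. Then I would peel off the implications one at a time. By the Distributivity axiom $\K_C(\alpha\to\beta)\to(\K_C\alpha\to\K_C\beta)$ and Modus Ponens, the last theorem gives $\vdash\K_C\phi_1\to\K_C(\phi_2\to\cdots\to(\phi_n\to\psi)\cdots)$, hence $\K_C\phi_1\vdash\K_C(\phi_2\to\cdots\to(\phi_n\to\psi)\cdots)$. Iterating this step with $\phi_2,\dots,\phi_n$ in turn — formally, by induction on $n$, with the base case $n=0$ being just $\K_C$-Necessitation applied to $\vdash\psi$ — produces $\K_C\phi_1,\dots,\K_C\phi_n\vdash\K_C\psi$.

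The argument is entirely routine and contains no genuinely hard step. The only point that needs to be stated carefully is that the deduction theorem holds for the consequence relation $X\vdash$ employed here, which is the case exactly because that relation permits only Modus Ponens; once this is granted, the rest is a bounded iteration of the Distributivity axiom.
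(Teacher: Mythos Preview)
Your argument is correct and matches the paper's proof essentially step for step: apply the deduction theorem $n$ times, then Necessitation for $\K_C$, then iterate Distributivity plus Modus Ponens to peel off the antecedents. The paper records the deduction theorem as a separate lemma but otherwise proceeds identically.
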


\begin{lemma}\label{positive introspection lemma}
$\vdash \K_C\phi\to\K_C\K_C\phi$. 
\end{lemma}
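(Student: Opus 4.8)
The plan is to give a Hilbert-style derivation: Lemma~\ref{positive introspection lemma} is the positive-introspection principle (axiom~4 of $\mathsf{S5}$), and since our system takes only the \emph{negative} introspection axiom as primitive, it must be obtained as a derived theorem from Truth (axiom~1), Negative Introspection (axiom~2), Distributivity (axiom~3), and the $\K_C$-Necessitation rule, together with propositional reasoning. No properties of $\H_C$ are needed.

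First I would record two stepping-stone theorems. From the Truth axiom instantiated at the formula $\neg\K_C\phi$, namely $\K_C\neg\K_C\phi\to\neg\K_C\phi$, contraposition gives $\K_C\phi\to\neg\K_C\neg\K_C\phi$. Next, instantiate the Negative Introspection axiom at $\neg\K_C\phi$ to obtain $\neg\K_C\neg\K_C\phi\to\K_C\neg\K_C\neg\K_C\phi$; chaining this with the previous implication yields $\K_C\phi\to\K_C\neg\K_C\neg\K_C\phi$. For the other stepping stone, contrapose the Negative Introspection axiom instantiated at $\phi$, namely $\neg\K_C\phi\to\K_C\neg\K_C\phi$, to get the theorem $\neg\K_C\neg\K_C\phi\to\K_C\phi$. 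Applying $\K_C$-Necessitation to this theorem and then Distributivity produces $\K_C\neg\K_C\neg\K_C\phi\to\K_C\K_C\phi$. Composing $\K_C\phi\to\K_C\neg\K_C\neg\K_C\phi$ with $\K_C\neg\K_C\neg\K_C\phi\to\K_C\K_C\phi$ gives $\K_C\phi\to\K_C\K_C\phi$, as required.

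Alternatively, once the theorem $\neg\K_C\neg\K_C\phi\to\K_C\phi$ and the implication $\K_C\phi\to\K_C\neg\K_C\neg\K_C\phi$ are in hand, one can invoke Lemma~\ref{super distributivity} (which is available, being stated earlier) to pass from $\neg\K_C\neg\K_C\phi\vdash\K_C\phi$ to $\K_C\neg\K_C\neg\K_C\phi\vdash\K_C\K_C\phi$, and then conclude $\K_C\phi\vdash\K_C\K_C\phi$, whence $\vdash\K_C\phi\to\K_C\K_C\phi$ by the deduction theorem for $\vdash$. I do not expect a genuine obstacle here; the only point requiring care is that, because axiom~4 is not primitive, the Negative Introspection axiom must be used at two different instances ($\phi$ and $\neg\K_C\phi$), and one must keep the nesting of $\K_C$ and $\neg$ straight when contraposing.
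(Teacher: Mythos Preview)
Your derivation is correct and follows essentially the same route as the paper's proof: Truth at $\neg\K_C\phi$ contraposed, chained with Negative Introspection at $\neg\K_C\phi$ to reach $\K_C\phi\to\K_C\neg\K_C\neg\K_C\phi$, then Negative Introspection at $\phi$ contraposed, necessitated, and distributed to obtain $\K_C\neg\K_C\neg\K_C\phi\to\K_C\K_C\phi$, and finally the two implications composed. The alternative you sketch via Lemma~\ref{super distributivity} is a harmless variant of the same argument.
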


\begin{lemma}\label{KKK lemma}
$\vdash \K_F\K_{E}\K_F\phi\to \H_F\phi$, where $F\nsubseteq E$.
\end{lemma}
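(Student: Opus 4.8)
The plan is to derive this from the Coalition--Informant--Adversary axiom, using a base instance of the know-how modality that is available for free. Since $F\nsubseteq E$, fix an agent $f\in F\setminus E$ and set $C=\{f\}$, $I=F\setminus\{f\}$, and $A=E$. Then $C\cup I=F$ and $C\cap(I\cup A)=\varnothing$, because $f\notin E$ and $f\notin F\setminus\{f\}$, so the Coalition--Informant--Adversary axiom applies to these three coalitions. Instantiating that axiom with the formula $\top$ in place of its $\phi$ and our target formula $\phi$ in place of its $\psi$ yields
$$
\K_{C,I}\K_{A,I}(\K_C\top\to\K_{C,I}\phi)\to(\H_C\top\to\H_{C,I}\phi),
$$
i.e.\ $\K_F\K_{A,I}(\K_C\top\to\K_F\phi)\to(\H_C\top\to\H_F\phi)$.

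Next I would discharge the two ``easy'' conjuncts on the left. Because $C=\{f\}\neq\varnothing$, applying the Necessitation rule for $\H$ to the tautology $\top$ gives $\vdash\H_C\top$, so it remains to prove $\vdash\K_F\K_E\K_F\phi\to\K_F\K_{A,I}(\K_C\top\to\K_F\phi)$. Here the key point is that $\vdash\K_C\top$ (Necessitation for $\K$ on $\top$), which makes $\K_C\top\to\K_F\phi$ provably equivalent to $\K_F\phi$; propagating this equivalence inward under $\K_{A,I}$ and then under $\K_F$ via Lemma~\ref{super distributivity} reduces the goal to $\vdash\K_F\K_E\K_F\phi\to\K_F\K_{A,I}\K_F\phi$. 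Since $E=A\subseteq A\cup I$, the Monotonicity axiom gives $\vdash\K_E\K_F\phi\to\K_{A,I}\K_F\phi$, and one more application of Lemma~\ref{super distributivity} to prefix $\K_F$ completes this implication.

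Combining the instance of the Coalition--Informant--Adversary axiom with $\vdash\H_C\top$ and the implication just established, propositional reasoning yields $\vdash\K_F\K_E\K_F\phi\to\H_F\phi$, which is the claim since $C\cup I=F$. I expect the only delicate points to be purely bookkeeping: choosing the auxiliary formula $\top$ so that the antecedent $\K_C\top\to\K_{C,I}\phi$ collapses to $\K_{C,I}\phi$ (this works because $\K_C\top$ refers to the active sub-coalition $C$, not all of $F$), and noting that the side condition $C\neq\varnothing$ required for $\H_C\top$ and for the $\H$-Necessitation rule is exactly what the hypothesis $F\nsubseteq E$ guarantees. Everything else is routine normal-modal manipulation with the $\S5$ axioms for $\K$ together with Lemma~\ref{super distributivity}.
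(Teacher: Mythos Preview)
Your proof is correct and follows the same strategy as the paper: instantiate the Coalition--Informant--Adversary axiom with $\top$ in the first slot so that $\H_C\top$ comes for free by $\H$-Necessitation, then collapse the antecedent to $\K_F\K_{A,I}\K_F\phi$ and match it against $\K_F\K_E\K_F\phi$. The only difference is the instantiation: the paper takes $C=F\setminus E$, $I=F\cap E$, $A=E\setminus F$, which yields $A\cup I=E$ on the nose, whereas your choice $C=\{f\}$, $I=F\setminus\{f\}$, $A=E$ gives $A\cup I\supseteq E$ and hence requires the extra Monotonicity step you supply.
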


\begin{lemma}\label{K neg K to neg H lemma}
$\vdash \K_E\neg\K_F\phi\to \neg\H_F\phi$, where $E\cap F=\varnothing$.
\end{lemma}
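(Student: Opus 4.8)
The plan is to derive the contrapositive-flavoured statement by contraposition at the level of the hypotheses: assuming $\H_F\phi$ together with $E\cap F=\varnothing$, I want to conclude $\neg\K_E\neg\K_F\phi$, i.e.\ $\K_E\K_F\phi$ is consistent with $\H_F\phi$ — more precisely I will show $\vdash \H_F\phi \to \neg\K_E\neg\K_F\phi$, which is the same formula. The natural tool is the Coalition-Informant-Adversary axiom, specialized so that the ``adversary/informant'' slot is exactly $E$. Concretely, take the instance of CIA with coalition $F$, informants $I=\varnothing$, adversaries $A=E$; since $E\cap F=\varnothing$ the side condition $F\cap(I\cup A)=\varnothing$ holds, and the axiom reads
$$
\K_{F}\K_{E}(\K_F\phi\to\K_{F}\psi)\to(\H_F\phi\to\H_{F}\psi).
$$
Now I would instantiate $\psi:=\bot$. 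Then $\K_F\phi\to\K_F\bot$ is, using the Truth axiom $\K_F\bot\to\bot$ and propositional reasoning, provably equivalent to $\neg\K_F\phi$ (more carefully: $\vdash(\K_F\phi\to\K_F\bot)\leftrightarrow\neg\K_F\phi$, since $\K_F\bot\to\bot$ gives one direction and $\neg\K_F\phi\to(\K_F\phi\to\K_F\bot)$ is a tautology). Hence, applying Lemma~\ref{super distributivity} (or just the Distributivity axiom and Necessitation) twice to push $\K_F\K_E$ through this equivalence, the antecedent of the CIA instance is provably equivalent to $\K_F\K_E\neg\K_F\phi$, and the consequent's conclusion $\H_F\bot$ is refuted by the Nontermination axiom $\neg\H_F\bot$. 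So from CIA we get $\vdash \K_F\K_E\neg\K_F\phi\to(\H_F\phi\to\bot)$, i.e.\ $\vdash \K_F\K_E\neg\K_F\phi\to\neg\H_F\phi$.

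It remains to bridge the gap between $\K_E\neg\K_F\phi$ (the hypothesis actually in the lemma) and $\K_F\K_E\neg\K_F\phi$. For this I would use the S5 properties of $\K$: from $\K_E\neg\K_F\phi$, negative introspection / the standard S5 reasoning gives $\K_E\K_E\neg\K_F\phi$ (Lemma~\ref{positive introspection lemma}), and then Monotonicity in the reverse direction doesn't directly apply since $F$ need not contain $E$. Instead I should be a bit more careful: I actually want $\K_F\K_E(\cdots)$, so the cleanest route is to note that $\neg\K_F\phi$ is itself a statement about $F$'s knowledge, and by Negative Introspection $\neg\K_F\phi\to\K_F\neg\K_F\phi$; combined with Lemma~\ref{super distributivity} applied under $\K_E$ this yields $\K_E\neg\K_F\phi\to\K_E\K_F\neg\K_F\phi$. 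Then I would swap the two outer modalities — but $\K$ modalities for different coalitions do not commute in general, so this is the delicate point. A safer alternative: use Lemma~\ref{KKK lemma}-style reasoning or simply restructure so that CIA is instantiated with the antecedent already in the form $\K_F\K_E(\K_F\phi\to\K_F\bot)$ reached from $\K_E\neg\K_F\phi$ via: $\K_E\neg\K_F\phi \Rightarrow \K_E\K_F\neg\K_F\phi$ (negative introspection under $\K_E$) and then I want $\K_F$ outermost.

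Reconsidering, the genuinely clean derivation avoids commuting: note $\vdash \neg\K_F\phi \to \K_F\neg\K_F\phi$ (Negative Introspection), and hence $\vdash\K_E\neg\K_F\phi\to\K_E\K_F\neg\K_F\phi$ (Lemma~\ref{super distributivity}). Also, from $\K_E\neg\K_F\phi$, since we really only need $\neg\K_F\phi$ to be \emph{known by $F$ after} the operation and CIA's first $\K$-layer is about $C\cup I=F$'s ex-ante knowledge, I claim the required antecedent $\K_F\K_E\neg\K_F\phi$ follows because $\K_F\neg\K_F\phi$ already holds (Negative Introspection applied to whatever $F$ knows) — wait, that needs $\neg\K_F\phi$ as a premise, which is not given. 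So the honest statement is that the main obstacle is producing the outermost $\K_F$; the resolution I expect the authors to use is: apply CIA with $I=E$ and $A=\varnothing$ instead (still legitimate since $F\cap E=\varnothing$), giving $\K_{F,E}\K_{F,E}(\K_F\phi\to\K_{F,E}\psi)\to(\H_F\phi\to\H_{F,E}\psi)$, then by Monotonicity $\K_E\chi\to\K_{F,E}\chi$ supplies the outer layers from a single $\K_E$, and Lemma~\ref{positive introspection lemma} collapses $\K_{F,E}\K_{F,E}$; after the $\psi:=\bot$ substitution and Nontermination this yields $\vdash\K_E\neg\K_F\phi\to\neg\H_{F,E}\phi$, which by Strategic Introspection / the semantics of $\H$ combined with $E\cap F=\varnothing$ implies $\neg\H_F\phi$ (since $\H_F\phi\to\H_{F,E}\phi$ is derivable by monotonicity of $\H$ in its subscript — actually $\H_F\phi\to\H_{C}\phi$ for $F\subseteq C$ follows from item 5 of Definition~\ref{sat} and should be an earlier lemma or provable). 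That last implication $\H_F\phi\to\H_{F\cup E}\phi$ is the step I would double-check, but it is exactly what the definition of $\H_C$ with its subcoalition $C'$ is set up to give. I expect the monotonicity-of-$\H$ step and the bookkeeping to get $\K_{F,E}$ layers from $\K_E$ to be the only nontrivial parts; everything else is propositional reasoning plus Nontermination and the Truth axiom.
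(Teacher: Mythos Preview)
Your second attempt --- instantiating the Coalition-Informant-Adversary axiom with $C=F$, $I=E$, $A=\varnothing$, and $\psi=\bot$ --- is exactly the route the paper takes, and it works without the detour you worry about at the end. Two slips in your write-up obscure this.

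First, with $I=E$ and $A=\varnothing$ the inner modality in CIA is $\K_{A,I}=\K_E$, not $\K_{F,E}$; the instance reads
\[
\K_{F,E}\K_{E}(\K_F\phi\to\K_{F,E}\bot)\to(\H_F\phi\to\H_{F,E}\bot).
\]
This does not change your plan for the antecedent: from the tautology $\neg\K_F\phi\to(\K_F\phi\to\K_{F,E}\bot)$ one gets $\K_E\neg\K_F\phi\to\K_E(\K_F\phi\to\K_{F,E}\bot)$ by Necessitation and Distributivity, then $\K_E\to\K_E\K_E$ by Lemma~\ref{positive introspection lemma}, and finally $\K_E\to\K_{F,E}$ on the outermost layer by Monotonicity. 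That is precisely the paper's chain.

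Second, and this is the real gap, you misread the consequent of CIA as yielding $\neg\H_{F,E}\phi$ and then reach for a monotonicity-of-$\H$ principle to get back to $\neg\H_F\phi$. But look again: the consequent is $\H_F\phi\to\H_{F,E}\bot$, with the \emph{first} $\H$ carrying subscript $F$, not $F\cup E$. Since $\neg\H_{F,E}\bot$ is an instance of the Nontermination axiom, modus tollens gives $\neg\H_F\phi$ immediately. No monotonicity of $\H$ is needed (and indeed none is stated among the paper's lemmas). Once you correct this reading, your proposal collapses to exactly the paper's proof.
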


\begin{lemma}\label{K over vee lemma}
$\vdash\K_F(\K_E\phi\vee\psi)\to\K_E\phi\vee \K_F\psi$, where $E\subseteq F$.
\end{lemma}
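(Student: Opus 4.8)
The plan is to argue by cases on whether $\K_E\phi$ holds, exploiting that $\K$ is an S5 modality with Negative Introspection and that distributed knowledge is monotone in the coalition (axiom~4). The key observation is that the disjunct $\K_E\phi$ is itself a ``knowledge'' formula, so its negation is known by $E$, hence by $F$ since $E\subseteq F$.

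Concretely, I would first derive $\vdash \neg\K_E\phi\to\K_F\neg\K_E\phi$: apply Negative Introspection to get $\neg\K_E\phi\to\K_E\neg\K_E\phi$, and then Monotonicity with $E\subseteq F$ to replace $\K_E$ by $\K_F$ in the consequent. Next, note the propositional entailment $\K_E\phi\vee\psi,\ \neg\K_E\phi\vdash\psi$, and push it under $\K_F$ using Lemma~\ref{super distributivity} to obtain $\K_F(\K_E\phi\vee\psi),\ \K_F\neg\K_E\phi\vdash\K_F\psi$. Combining this with the first step gives $\K_F(\K_E\phi\vee\psi)\vdash\neg\K_E\phi\to\K_F\psi$, which is propositionally equivalent to $\K_F(\K_E\phi\vee\psi)\to(\K_E\phi\vee\K_F\psi)$, as required.

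There is no real obstacle here; the only thing to get right is the bookkeeping of which axiom licenses which step — in particular that Negative Introspection is what turns the ``bad'' disjunct $\K_E\phi$ into something whose negation is itself known, and that Monotonicity (not, say, Distributivity) is what lets us lift that knowledge from $E$ to the larger coalition $F$. Everything else is routine propositional reasoning packaged through Lemma~\ref{super distributivity}.
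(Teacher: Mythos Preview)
Your proposal is correct and follows essentially the same route as the paper: both arguments reduce to establishing $\K_F(\K_E\phi\vee\psi)\vdash \neg\K_E\phi\to\K_F\psi$ via Negative Introspection on $\K_E\phi$, Monotonicity from $E$ to $F$, and distributing $\K_F$ over the underlying propositional entailment. The only cosmetic difference is ordering --- you apply Negative Introspection and Monotonicity first and then invoke Lemma~\ref{super distributivity}, whereas the paper first unfolds the Distributivity/Necessitation steps explicitly and applies Negative Introspection and Monotonicity at the end.
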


\noindent We show the following soundness theorem in the appendix.

\begin{theorem}\label{completeness}
For any state $w$ of a clandestine game, if $w\Vdash\chi$ for each formula $\chi\in X$ and $X\vdash\phi$, then $w\Vdash\phi$.
\end{theorem}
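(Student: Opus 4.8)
The plan is to prove soundness by the standard induction on the structure of the formula $\phi$, but since the theorem is stated for derivations $X\vdash\phi$, I would actually structure it as an induction on the length of the derivation of $\phi$ from $X$ together with the theorems of the system. The base cases are: (i) $\phi\in X$, which is immediate from the hypothesis; (ii) $\phi$ is a propositional tautology, which is trivial; (iii) $\phi$ is one of the eight axioms; and the inductive steps handle Modus Ponens and the two Necessitation rules. Since $X\vdash\phi$ only permits Modus Ponens, the Necessitation steps arise only inside the derivations of the system's theorems (i.e.\ when establishing $\vdash\psi$), so I would phrase a single lemma: every theorem of the system is valid in every state of every clandestine game, proved by induction on derivation length using all three rules; then the theorem follows since $X\vdash\phi$ combines such valid theorems with members of $X$ using only Modus Ponens, which preserves satisfiability at a fixed state $w$.

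Next I would dispatch the easy axioms. Truth, Negative Introspection, Distributivity, Monotonicity follow from $\sim_C$ being an equivalence relation and from $C'\subseteq C$ implying $\sim_C\,\subseteq\,\sim_{C'}$ (so $\K_{C'}\phi$ is the stronger statement) — these are the familiar S5/distributed-knowledge facts. Strategic Introspection $\H_C\phi\to\K_C\H_C\phi$ is almost definitional: if the witnessing pair $(C',\delta)$ works at $w$, then for any $w''$ with $w\sim_C w''$ the same pair works at $w''$, because the defining condition in item 5 quantifies over all $w'$ with $w\sim_C w'$, and $w\sim_C w''$ makes that quantifier set identical (using transitivity of $\sim_C$). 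Nontermination $\neg\H_C\bot$ uses condition 4(b): for any nonempty $C'\subseteq C$ and any $\delta$, starting from $w$ itself ($w\sim_C w$) there is a state $u$ with $(w,C',\delta,u)\in M$, and taking $u'=u$ gives $u'\Vdash\bot$, impossible. Empty Coalition $\neg\H_\varnothing\phi$ is immediate because item 5 requires a \emph{nonempty} $C'\subseteq C$, and $C=\varnothing$ has none. The Necessitation rules are routine: the knowledge rule by unfolding item 4, and the $\H_C$ rule (with $C\neq\varnothing$) because if $\psi$ is valid everywhere then \emph{any} nonempty $C'\subseteq C$ and any $\delta$ witness $\H_C\psi$, using nontermination 4(b) to guarantee the existence of \emph{some} $u$ so the implication in item 5 is not vacuously about an empty set in a problematic way — actually it holds simply because $u'\Vdash\psi$ for all $u'$.

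The main obstacle is the soundness of the Coalition-Informant-Adversary axiom: assume $C\cap(I\cup A)=\varnothing$ and that at state $w$ we have $w\Vdash\K_{C,I}\K_{A,I}(\K_C\phi\to\K_{C,I}\psi)$ and $w\Vdash\H_C\phi$; we must produce a witness for $\H_{C,I}\psi$. From $\H_C\phi$ take the nonempty $C'\subseteq C$ and operation $\delta$; note $C'\subseteq C\subseteq C\cup I$ and $C'$ is nonempty, so $(C',\delta)$ is a legitimate candidate witness for $\H_{C,I}\psi$. To verify it, take any $w',u,u'$ with $w\sim_{C,I} w'$, $(w',C',\delta,u)\in M$, and $u\sim_{C,I} u'$; we need $u'\Vdash\psi$. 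The key points are: (a) since $C'\subseteq C$ and $C\cap(I\cup A)=\varnothing$, we have $\overline{C'}\supseteq I\cup A\cup(I)$, so the concealment condition 4(a) gives $w'\sim_{\overline{C'}} u$, hence in particular $w'\sim_{A,I} u$ and $w'\sim_I u$; (b) the outer $\K_{C,I}$ lets us transport the formula $\K_{A,I}(\K_C\phi\to\K_{C,I}\psi)$ from $w$ to $w'$ (as $w\sim_{C,I}w'$), then property (a) transports $\K_C\phi\to\K_{C,I}\psi$ itself forward to $u$ along $\sim_{A,I}$, then forward to $u'$ along $\sim_{C,I}$; (c) meanwhile $\H_C\phi$ applied at $w$ with witnesses $w',u$ and the trivial $u\sim_C u$ — wait, more carefully: we need $u\Vdash\K_C\phi$, i.e.\ for every $u''$ with $u\sim_C u''$ that $u''\Vdash\phi$; this follows from the defining clause of $\H_C\phi$ at $w$, since $w\sim_C w'$ (from $w\sim_{C,I}w'$ and $C\subseteq C\cup I$), $(w',C',\delta,u)\in M$, and $u\sim_C u''$ together force $u''\Vdash\phi$; (d) combining, $u'\Vdash\K_C\phi\to\K_{C,I}\psi$ and $u'\Vdash\K_C\phi$ (the latter because $u\sim_{C,I}u'$ and $u\Vdash\K_C\phi$, using $C\subseteq C\cup I$ and positive introspection of $\K_C$), hence $u'\Vdash\K_{C,I}\psi$ and therefore $u'\Vdash\psi$. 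The delicate bookkeeping is exactly matching which coalition's indistinguishability is available at each transport step and ensuring the two occurrences of $\sim_{C,I}$ in the $\H_{C,I}$-clause are strong enough to feed both the $\K_C\phi$ premise and the conditional; the two stacked knowledge modalities in the axiom are precisely what make this go through, and I would present this verification as the heart of the proof.
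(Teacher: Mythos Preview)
Your approach matches the paper's: both prove soundness by verifying each axiom and inference rule semantically, with the Coalition--Informant--Adversary axiom as the only substantive case and the remaining items handled exactly as you sketch. However, your verification of that axiom contains a small slip. In step (b) you write that, having obtained $u\Vdash\K_C\phi\to\K_{C,I}\psi$, you can transport this implication ``forward to $u'$ along $\sim_{C,I}$''. But $\K_C\phi\to\K_{C,I}\psi$ is not a $\K_{C,I}$-formula, so $u\sim_{C,I}u'$ does not by itself push it across; nothing in the hypotheses gives you $u\Vdash\K_{C,I}(\K_C\phi\to\K_{C,I}\psi)$. The fix is simply to apply modus ponens at $u$ rather than at $u'$: from (b) you have $u\Vdash\K_C\phi\to\K_{C,I}\psi$, and from (c) you have $u\Vdash\K_C\phi$, so $u\Vdash\K_{C,I}\psi$; then $u\sim_{C,I}u'$ yields $u'\Vdash\psi$ directly. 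This is precisely what the paper does, and it makes the detour in your step (d) through $u'\Vdash\K_C\phi$ and $u'\Vdash\K_C\phi\to\K_{C,I}\psi$ unnecessary.
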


In the rest of this paper, we prove the completeness of our logical system.

\section{Canonical Model}

As usual, the key step in the proof of completeness is the construction of a canonical model.

The standard canonical model for epistemic logic of individual knowledge S5 defines states as maximal consistent sets of formulae. Two such sets are indistinguishable to an agent $a$ if they contain the same $\K_a$-formulae. Unfortunately, this approach does not work for distributed knowledge because any two sets that are indistinguishable to agents $a$ and $b$ would then only share $\K_a$ and $\K_b$ formulae. They might have different $\K_{a,b}$-formulae. 

\begin{figure}[ht]
\begin{center}
\scalebox{0.45}{\includegraphics{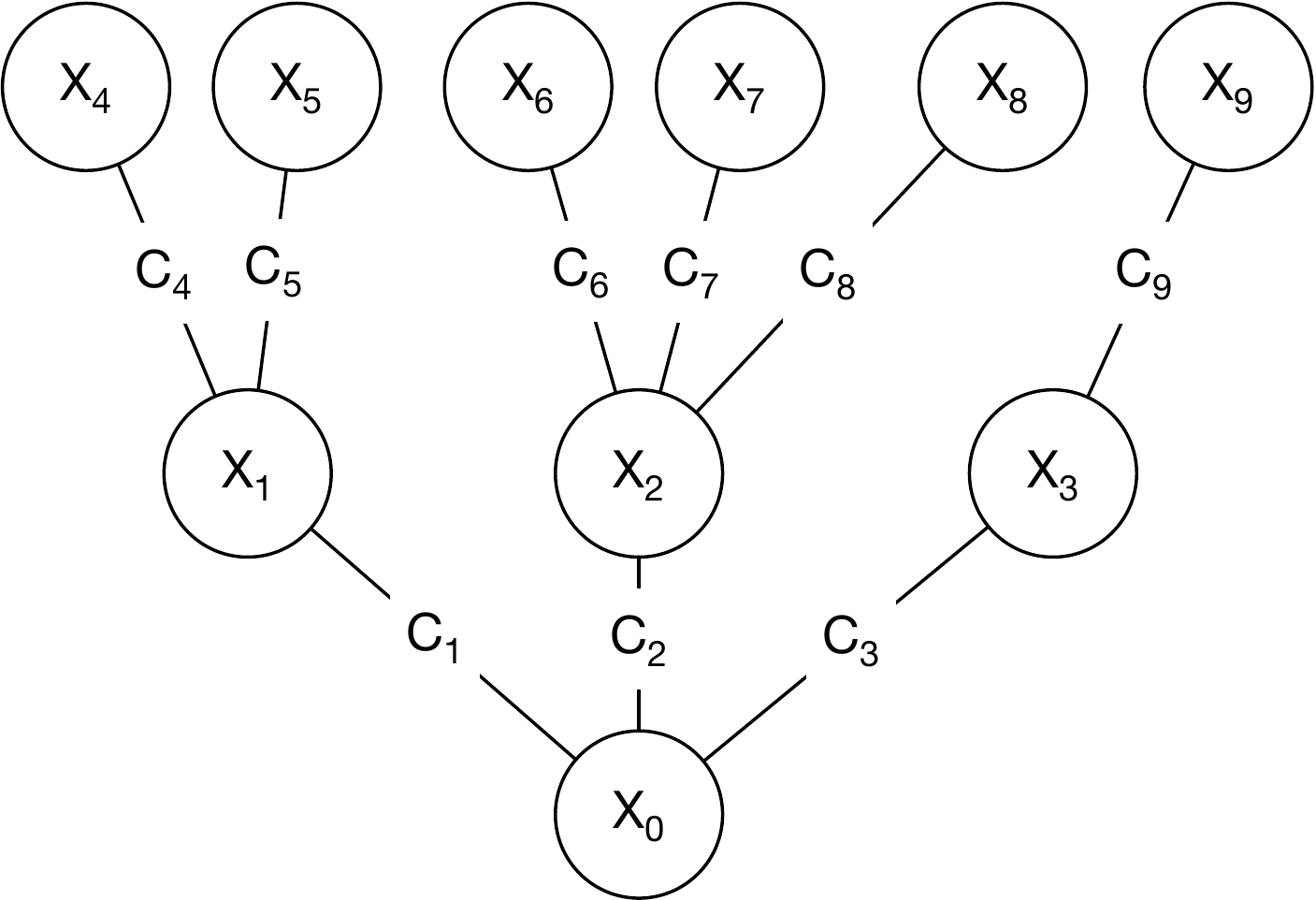}}
\caption{Tree Construction.}\label{tree figure}
\end{center}
\end{figure}

To address this issue, we define the canonical model using a tree whose nodes are labelled by maximal consistent sets and whose edges are labelled by coalitions, see Figure~\ref{tree figure}. 
Informally, states are the nodes of this tree. Formally, any state is a sequence of labels along the path leading from the root of the tree to a node of the tree. 

Let us now define canonical clandestine game $M(X_0)=(W,\{\sim_a\}_{a\in \mathcal{A}},\Delta,M,\pi)$ for an arbitrary maximal consistent set of formulae $X_0$.

\begin{definition}\label{canonical state}
Set $W$ consists of all finite sequences $X_0,C_1,\dots,C_n,X_n$, such that $n\ge 0$ and
\begin{enumerate}
    \item $X_i$ is a maximal consistent set of formulae for all $i>1$,
    \item $C_i\subseteq\mathcal{A}$ is a coalition for all $i\le n$,
    \item $\{\phi\in\Phi\;|\;\K_{C_{i}}\phi\in X_{i-1}\}\subseteq X_{i}$, for all $i\le n$.
\end{enumerate}
\end{definition}

We define a tree structure on the set of states $W$ by saying that state $w=X_0,C_1,X_1,C_2,\dots,C_n,X_n$ and state $w::C_{n+1}::X_{n+1}$ are connected by an undirected edge labeled with all agents in coalition $C_{n+1}$. For example, for the tree depicted in Figure~\ref{tree figure}, state $X_0,C_2,X_2$ is adjacent to state $X_0,C_2,X_2,C_8,X_8$ and the edge between them is labelled with all agents in coalition $C_8$.

\begin{definition}\label{canonical sim}
For any two states $w,w'\in W$ and any agent $a\in\mathcal{A}$, let $w\sim_a w'$ if all edges along the simple path between $w$ and $w'$ are labelled with agent $a$.
\end{definition}
Note that, in the above definition, the path might consist of a single node. 
\begin{lemma}\label{canonical sim is equivalence relation}
Relation $\sim_a$ is an equivalence relation on set $W$.
\end{lemma}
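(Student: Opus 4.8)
The plan is to verify reflexivity, symmetry, and transitivity directly from Definition~\ref{canonical sim}, exploiting the fact that $W$ carries a tree structure so that between any two states there is a unique simple path. Reflexivity is immediate: the simple path from a state $w$ to itself consists of the single node $w$ and no edges, so the condition ``all edges along the path are labelled with $a$'' is vacuously satisfied, hence $w\sim_a w$. Symmetry is equally immediate: the simple path between $w$ and $w'$ is the same undirected path as the one between $w'$ and $w$, so if all its edges are labelled with $a$, this holds regardless of the direction in which we traverse it; thus $w\sim_a w'$ implies $w'\sim_a w$.

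The only step requiring an argument is transitivity. Suppose $w\sim_a w'$ and $w'\sim_a w''$. I would consider the unique simple path $P$ from $w$ to $w''$ in the tree and argue that every edge of $P$ lies on the path from $w$ to $w'$ or on the path from $w'$ to $w''$, both of whose edges are labelled with $a$ by hypothesis. Concretely, in a tree the simple path from $w$ to $w''$ is contained in the union of the simple path from $w$ to $w'$ and the simple path from $w'$ to $w''$ (one can see this by taking the meet, i.e. the vertex where the two paths from $w$ and $w''$ to $w'$ first merge, and observing the $w$-to-$w''$ path is the concatenation of the portion from $w$ up to that meet and from the meet down to $w''$, each of which is a sub-path of one of the two given paths). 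Hence every edge of $P$ is labelled with $a$, so $w\sim_a w''$.

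I expect the main (minor) obstacle to be making the tree-path argument for transitivity precise without developing too much graph-theoretic machinery: one must be careful to use that $W$ genuinely forms a tree under the adjacency defined right after Definition~\ref{canonical state}, so that simple paths are unique and behave as described. This is where I would invoke the explicit parent–child structure of the sequences in $W$ — a state $w::C::X$ has $w$ as its unique neighbour ``towards the root,'' so the underlying graph is acyclic and connected on each component generated from $X_0$, giving uniqueness of simple paths and the concatenation property used above.
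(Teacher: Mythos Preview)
Your approach is correct: reflexivity, symmetry, and transitivity follow exactly as you outline from the tree structure on $W$ and Definition~\ref{canonical sim}. The paper itself does not give a proof of this lemma at all---it is stated and left as evident---so there is nothing to compare against; your write-up simply supplies the routine verification the authors omitted.
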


\begin{definition}
Set of operations $\Delta$ is the set of all formulae in language $\Phi$.
\end{definition}

Informally, operation $\phi\in\Delta$ is a clandestine operation in the canonical game that achieves $\phi$ unnoticeable to the agents outside  of the coalition that performed the operation and makes the result known to the coalition. This intuition is captured in the definition below. Throughout the paper, by $hd(w)$ we denote the last element of the sequence $w$.

\begin{definition}\label{canonical mechanism}
Canonical mechanism $M$  is a set of all tuples $(w,C,\phi,u)$ where $w,u\in W$ are states, $C\subseteq\mathcal{A}$ is a coalition, and $\phi\in\Phi$ is a formula, such that $w\sim_{\overline{C}} u$ and  if $\H_C\phi\in hd(w)$, then $\K_C\phi\in hd(u)$.
\end{definition}

Note that the requirement $w\sim_{\overline{C}} u$ in the above definition implies that mechanism $M$ satisfies the concealment condition from Definition~\ref{game}. Next, we show that $M$ also satisfies the nontermination condition.

\begin{lemma}
For any state $w$, any coalition $C\subseteq\mathcal{A}$, and any formula $\phi\in\Phi$, there is a state $u\in W$ such that  $(w,C,\phi,u)\in M$.
\end{lemma}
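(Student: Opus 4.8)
The plan is to distinguish two cases according to whether $\H_C\phi\in hd(w)$. If $\H_C\phi\notin hd(w)$, I would simply take $u=w$: reflexivity of $\sim_{\overline{C}}$ (Lemma~\ref{canonical sim is equivalence relation}) gives $w\sim_{\overline{C}}w$, and the conditional clause of Definition~\ref{canonical mechanism} is vacuously true, so $(w,C,\phi,w)\in M$. The substantive case is $\H_C\phi\in hd(w)$, where I would attach a fresh child to the node $w$ in the tree.

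Concretely, consider the set
\[
Y=\{\psi\in\Phi\;|\;\K_{\overline{C}}\psi\in hd(w)\}\cup\{\K_C\phi\}.
\]
The key step is to show that $Y$ is consistent. Assume it is not; then there are formulae $\psi_1,\dots,\psi_n$ with each $\K_{\overline{C}}\psi_i\in hd(w)$ and $\psi_1,\dots,\psi_n\vdash\neg\K_C\phi$. By Lemma~\ref{super distributivity} this yields $\K_{\overline{C}}\psi_1,\dots,\K_{\overline{C}}\psi_n\vdash\K_{\overline{C}}\neg\K_C\phi$, and since $hd(w)$ is a maximal consistent set containing each $\K_{\overline{C}}\psi_i$, it contains $\K_{\overline{C}}\neg\K_C\phi$. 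Because $\overline{C}\cap C=\varnothing$, Lemma~\ref{K neg K to neg H lemma} gives $\vdash\K_{\overline{C}}\neg\K_C\phi\to\neg\H_C\phi$, hence $\neg\H_C\phi\in hd(w)$, contradicting $\H_C\phi\in hd(w)$ and the consistency of $hd(w)$. (The degenerate subcase $n=0$ is handled identically, using that the $n=0$ instance of Lemma~\ref{super distributivity} is the Necessitation rule.)

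Once consistency of $Y$ is secured, I would invoke Lindenbaum's lemma to extend $Y$ to a maximal consistent set $X$ and set $u=w::\overline{C}::X$. Then $u\in W$ by Definition~\ref{canonical state}, since condition~3 for the newly added edge is precisely $\{\psi\;|\;\K_{\overline{C}}\psi\in hd(w)\}\subseteq X$, which holds as this set is contained in $Y\subseteq X$. The unique edge joining $w$ and $u$ is labelled by all agents of $\overline{C}$, so by Definition~\ref{canonical sim} we get $w\sim_a u$ for every $a\in\overline{C}$, i.e.\ $w\sim_{\overline{C}}u$. Finally $\K_C\phi\in Y\subseteq X=hd(u)$, so the implication ``if $\H_C\phi\in hd(w)$ then $\K_C\phi\in hd(u)$'' holds, and therefore $(w,C,\phi,u)\in M$. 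The extremal case $C=\mathcal{A}$ (so $\overline{C}=\varnothing$) causes no trouble: the requirement $w\sim_{\overline{C}}u$ becomes vacuous and Lemma~\ref{K neg K to neg H lemma} still applies because $\varnothing\cap\mathcal{A}=\varnothing$; and the case $C=\varnothing$ never reaches the substantive branch, since $\vdash\neg\H_\varnothing\phi$ (Empty Coalition axiom) forces $\H_\varnothing\phi\notin hd(w)$.

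I expect the only genuine obstacle to be the consistency claim for $Y$, i.e.\ identifying that Lemma~\ref{super distributivity} together with Lemma~\ref{K neg K to neg H lemma} is exactly what is needed to derive the contradiction; everything else is routine Lindenbaum-plus-tree-extension bookkeeping.
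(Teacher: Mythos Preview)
Your proposal is correct and follows essentially the same approach as the paper: the same two-case split on whether $\H_C\phi\in hd(w)$, the same set $Y$ (the paper calls it $X$), the same consistency argument via Lemma~\ref{super distributivity} and Lemma~\ref{K neg K to neg H lemma}, and the same tree extension $u=w::\overline{C}::X'$. Your additional remarks on the extremal cases $C=\mathcal{A}$ and $C=\varnothing$ are sound but not explicitly spelled out in the paper.
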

\begin{proof} We consider the following two cases separately:

\vspace{1mm}
\noindent{\bf Case I:} $\H_C\phi\in hd(w)$. Let
$$
X=\{\K_C\phi\}\cup \{\psi\;|\;\K_{\overline{C}}\psi\in hd(w)\}.
$$
\begin{claim}
Set $X$ is consistent.
\end{claim}
\begin{proof-of-claim}
Assume the opposite. Thus, 
there are formulae $\K_{\overline{C}}\psi_1$,\dots, $\K_{\overline{C}}\psi_n\in hd(w)$
such that
$
\psi_1,\dots,\psi_n\vdash \neg\K_C\phi.
$
Hence, 
$
\K_{\overline{C}}\psi_1,\dots,\K_{\overline{C}}\psi_n\vdash \K_{\overline{C}}\neg\K_C\phi.
$
by Lemma~\ref{super distributivity}.
Then, 
$
hd(w)\vdash\K_{\overline{C}}\neg\K_C\phi
$
by the assumption 
$\K_{\overline{C}}\psi_1,\dots, \K_{\overline{C}}\psi_n\in hd(w)$.
Thus,
$
hd(w)\vdash \neg\H_{C}\phi
$
by Lemma~\ref{K neg K to neg H lemma} and the Modus Ponens inference rule.
Then, $\H_{C}\phi\notin hd(w)$ because set $hd(w)$ is consistent, which contradicts the assumption of the case. 
\end{proof-of-claim}

Let $X'$ be any maximal consistent extension of set $X$ and $u$ be the sequence $w::\overline{C}::X'$. Then, $w\in W$ by Definition~\ref{canonical state} as well as the choice of sets $X$ and $X'$.

Finally, note that $w\sim_{\overline{C}} u$ by Definition~\ref{canonical sim} because $u=w::\overline{C}::X'$. Also, $\K_C\phi\in X\subseteq X'=hd(u)$ by the choice of sets $X$ and $X'$ and the choice of sequence $u$. Therefore,  $(w,C,\phi,u)\in M$ by Definition~\ref{canonical mechanism}.

\vspace{1mm}
\noindent{\bf Case II:}  $\H_C\phi\notin hd(w)$. Take $u$ to be world $w$.
%
%
%
%
Therefore,  $(w,C,\phi,u)\in M$ by Definition~\ref{canonical mechanism}. 
This concludes the proof of the lemma.
\end{proof}

\begin{definition}\label{canonical pi}
$\pi(p)=\{w\in W\;|\; p\in hd(w)\}$.
\end{definition}

This concludes the definition of the canonical model $M(X_0)=(W,\{\sim_a\}_{a\in \mathcal{A}},\Delta,M,\pi)$.



\section{Completeness}

As usual, the proof of completeness is using an ``induction'' (or ``truth'') lemma to connect the syntax of our system with the semantics of the canonical model. In our case, this is Lemma~\ref{induction lemma}. The next five lemmas are auxiliary statements that will be used in different cases of the induction step in the proof of Lemma~\ref{induction lemma}. 

\begin{lemma}\label{edge transport lemma}
$\K_D\phi\in X_n$ iff $\K_D\phi\in X_{n+1}$ for any formula $\phi\in\Phi$, any $n\ge 0$, and any state $X_0,C_1,X_1,C_2,\dots,X_n,C_{n+1},X_{n+1}\in W$, and any coalition $D\subseteq C_{n+1}$.
\end{lemma}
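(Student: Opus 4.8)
The plan is to prove both directions of the biconditional separately, in each case reducing to item~3 of Definition~\ref{canonical state} applied with $i=n+1$, which tells us that $\{\psi\in\Phi \mid \K_{C_{n+1}}\psi\in X_n\}\subseteq X_{n+1}$. The only extra ingredients needed are the Monotonicity axiom (using the hypothesis $D\subseteq C_{n+1}$), the Positive Introspection property from Lemma~\ref{positive introspection lemma}, the Negative Introspection axiom, and the maximal consistency of the sets $X_n$ and $X_{n+1}$.

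For the forward direction, I would start from $\K_D\phi\in X_n$. By Lemma~\ref{positive introspection lemma} and the maximality of $X_n$, we get $\K_D\K_D\phi\in X_n$. Applying the Monotonicity axiom with $D\subseteq C_{n+1}$ (to the formula $\K_D\phi$) gives $\vdash\K_D\K_D\phi\to\K_{C_{n+1}}\K_D\phi$, so $\K_{C_{n+1}}\K_D\phi\in X_n$. Now item~3 of Definition~\ref{canonical state} with $i=n+1$ yields $\K_D\phi\in X_{n+1}$, as desired.

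For the converse, I would argue contrapositively. Suppose $\K_D\phi\notin X_n$; then $\neg\K_D\phi\in X_n$ by maximality. The Negative Introspection axiom gives $\K_D\neg\K_D\phi\in X_n$, and then Monotonicity with $D\subseteq C_{n+1}$ gives $\K_{C_{n+1}}\neg\K_D\phi\in X_n$. By item~3 of Definition~\ref{canonical state} with $i=n+1$ we conclude $\neg\K_D\phi\in X_{n+1}$, hence $\K_D\phi\notin X_{n+1}$ since $X_{n+1}$ is consistent. This completes the biconditional. Note the argument is uniform in $n\ge 0$: when $n=0$ the state is $X_0,C_1,X_1$ and item~3 with $i=1$ supplies exactly the needed inclusion.

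I do not anticipate a serious obstacle here; the only points requiring care are keeping the direction of the Monotonicity axiom straight (it lifts knowledge of a \emph{smaller} coalition to a \emph{larger} one, which is exactly what the hypothesis $D\subseteq C_{n+1}$ provides) and remembering that the introspection steps are what allow us to push a $\K_D$-formula across an edge labelled by a superset of $D$ in \emph{both} directions.
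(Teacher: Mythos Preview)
Your proposal is correct and follows essentially the same approach as the paper's proof: positive introspection plus Monotonicity and Definition~\ref{canonical state} for the forward direction, and the contrapositive via Negative Introspection plus Monotonicity for the converse. The steps and their order match the paper almost verbatim.
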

\begin{proof}
If $\K_D\phi\in X_n$, then $X_n\vdash\K_D\K_D\phi$ by Lemma~\ref{positive introspection lemma}. Hence, $X_n\vdash\K_{C_{n+1}}\K_D\phi$ by the Monotonicity axiom, the assumption $D\subseteq {C_{n+1}}$, and the Modus Ponens  inference rule. Thus, $\K_{C_{n+1}}\K_D\phi\in X_n$ by the maximality of set $X_n$. Therefore, $\K_D\phi\in X_{n+1}$ by Definition~\ref{canonical state}.

Suppose that $\K_D\phi\notin X_n$. Hence, $\neg\K_D\phi\in X_n$ by the maximality of set $X_n$. Thus, $X_n\vdash \K_D\neg\K_D\phi$ by the Negative Introspection axiom and the Modus Ponens  inference rule. Hence,  $X_n\vdash \K_{C_{n+1}}\neg\K_D\phi$ by the Monotonicity axiom, the assumption $D\subseteq C_{n+1}$, and the Modus Ponens  inference rule. Then, $\K_{C_{n+1}}\neg\K_D\phi\in X_n$ by the maximality of set $X_n$. Thus, $\neg\K_D\phi\in X_{n+1}$ by Definition~\ref{canonical state}. Therefore, $\K_D\phi\notin X_{n+1}$ because set $X_{n+1}$ is consistent. 
\end{proof}
\begin{lemma}\label{K all lemma}
If $\K_C\phi\in hd(w)$ and $w\sim_C u$, then $\phi\in hd(u)$.
\end{lemma}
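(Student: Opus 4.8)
\textbf{Proof plan for Lemma~\ref{K all lemma}.}

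The plan is to argue by induction on the length of the simple path in the canonical tree connecting $w$ and $u$. Recall that, by Definition~\ref{canonical sim}, the hypothesis $w\sim_C u$ means that every edge along the simple path between $w$ and $u$ is labelled with each agent in $C$, i.e.\ with a coalition containing $C$. So along this path we have states $w=Y_0,D_1,Y_1,\dots,D_k,Y_k=u$ where, at each step, consecutive states are adjacent in the tree and the connecting edge is labelled by a coalition $E_i\supseteq C$ (here the ``step'' is either going one level down, $Y_{i-1}::E_i::Y_i$, or one level up, $Y_i::E_i::Y_{i-1}$, but in both cases the relevant maximal consistent sets are $hd(Y_{i-1})$ and $hd(Y_i)$, and $E_i$ is the label of the edge between them).

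The base case $k=0$ is immediate: then $w=u$, so $\phi\in hd(u)$ follows from $\K_C\phi\in hd(w)$ together with the Truth axiom and the maximality of $hd(w)$. For the inductive step, it suffices to show that if $\K_C\phi$ belongs to the head of one endpoint of a single tree edge labelled by a coalition $E\supseteq C$, then $\K_C\phi$ belongs to the head of the other endpoint as well; applying this repeatedly moves $\K_C\phi$ along the path from $hd(w)$ to $hd(u)$, and a final application of the Truth axiom at $u$ gives $\phi\in hd(u)$. This single-edge transport is exactly what Lemma~\ref{edge transport lemma} provides: taking $D=C$ there and noting $C\subseteq E$, we get $\K_C\phi\in X_n$ iff $\K_C\phi\in X_{n+1}$ for the two sets labelling the endpoints of the edge. (The direction of travel along the edge does not matter since Lemma~\ref{edge transport lemma} is an ``iff''.)

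The only mild subtlety — and the part worth stating carefully rather than the genuinely hard part — is the bookkeeping for ``the head of a state'': one must observe that for adjacent states $v$ and $v::E::Z$ the heads are exactly the maximal consistent sets appearing as the $X_n$ and $X_{n+1}$ of Lemma~\ref{edge transport lemma}, and that whether $w$ and $u$ lie in ancestor/descendant position or have a common ancestor (a ``bend'' in the path, as in Figure~\ref{tree figure}), each individual edge of the simple path is still a parent--child edge in the tree, so Lemma~\ref{edge transport lemma} applies to it verbatim. Once that is noted, the induction goes through with no further obstacle.
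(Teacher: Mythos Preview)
Your proposal is correct and follows essentially the same route as the paper: use $w\sim_C u$ to see that every edge on the simple path carries a label containing $C$, repeatedly apply Lemma~\ref{edge transport lemma} (with $D=C$) to propagate $\K_C\phi$ from $hd(w)$ to $hd(u)$, and finish with the Truth axiom and maximality of $hd(u)$. The paper compresses the edge-by-edge induction into a single appeal to Lemma~\ref{edge transport lemma}, whereas you spell out the induction and the parent--child bookkeeping explicitly, but the argument is the same.
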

\begin{proof}
Assumption $w\sim_C u$ implies that all edges along the unique simple path between nodes $w$ and $u$ are labeled with all agents in coalition $C$. Thus, $\K_C\phi\in hd(u)$ by Lemma~\ref{edge transport lemma}. Hence, $hd(u)\vdash \phi$ by the Truth axiom and the Modus Ponens inference rule. Therefore, $\phi\in hd(u)$ because set $hd(u)$ is maximal.
\end{proof}
\begin{lemma}\label{K exists lemma}
If $\K_C\phi\notin hd(w)$, then there is $u\in W$ such that $w\sim_C u$ and $\phi\notin hd(u)$.
\end{lemma}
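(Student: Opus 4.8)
The plan is to construct the required state $u$ directly as a one-step extension of $w$ in the canonical tree, using a maximal consistent set that witnesses the failure of $\K_C\phi$ at $w$. Concretely, I would first set
$$
Y=\{\neg\phi\}\cup\{\psi\in\Phi\;|\;\K_C\psi\in hd(w)\}
$$
and claim that $Y$ is consistent. If not, there would be formulae $\K_C\psi_1,\dots,\K_C\psi_n\in hd(w)$ with $\psi_1,\dots,\psi_n\vdash\phi$; applying Lemma~\ref{super distributivity} gives $\K_C\psi_1,\dots,\K_C\psi_n\vdash\K_C\phi$, hence $hd(w)\vdash\K_C\phi$, and by maximality $\K_C\phi\in hd(w)$, contradicting the hypothesis. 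This is the routine ``witness set is consistent'' argument that recurs throughout the paper.

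Next I would let $Y'$ be any maximal consistent extension of $Y$ and define $u$ to be the sequence $w::C::Y'$. I need to check $u\in W$ via Definition~\ref{canonical state}: the new set $Y'$ is maximal consistent, the new label $C$ is a coalition, and the key inclusion $\{\chi\in\Phi\;|\;\K_C\chi\in hd(w)\}\subseteq Y'$ holds because that set is contained in $Y\subseteq Y'$ by construction. So $u$ is a legitimate state adjacent to $w$ in the tree, with the connecting edge labelled by all agents of $C$.

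It then remains to verify the two conclusions. For $w\sim_C u$: by Definition~\ref{canonical sim} this holds iff every edge on the simple path between $w$ and $u$ is labelled with every agent in $C$; but the path consists of the single edge $w$–$u$, which is labelled with exactly the agents of $C$, so $w\sim_C u$. For $\phi\notin hd(u)$: we have $\neg\phi\in Y\subseteq Y'=hd(u)$, and since $hd(u)$ is consistent, $\phi\notin hd(u)$. This completes the construction.

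I do not anticipate a serious obstacle here; the only subtlety is making sure the one-step extension genuinely lands in $W$, i.e. that the defining condition~3 of Definition~\ref{canonical state} is met, and this is immediate from how $Y$ was chosen. The consistency claim for $Y$ is the ``main'' step in the sense of being the place where an actual argument (via Lemma~\ref{super distributivity} and maximality) is needed, but it is entirely standard. Everything else is bookkeeping against the definitions of the canonical state, of $\sim_C$, and of $hd(\cdot)$.
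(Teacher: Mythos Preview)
Your proposal is correct and matches the paper's own proof essentially line for line: the same witness set $\{\neg\phi\}\cup\{\psi\mid\K_C\psi\in hd(w)\}$, the same consistency argument via Lemma~\ref{super distributivity}, and the same one-step extension $u=w::C::Y'$. If anything, you are slightly more explicit than the paper in verifying $u\in W$ and $w\sim_C u$.
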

\begin{proof}
Consider set $X=\{\neg\phi\}\cup\{\psi\;|\;\K_C\psi\in hd(w)\}$.
\begin{claim}
Set $X$ is consistent.
\end{claim}
\begin{proof-of-claim}
Suppose the opposite. Thus, there are 
formulae 
$\K_C\psi_1,\dots,\K_C\psi_n\in hd(w)$ such that 
$\psi_1,\dots,\psi_n\vdash \phi$.
Hence, $\K_C\psi_1,\dots,\K_C\psi_n\vdash \K_C\phi$ by Lemma~\ref{super distributivity}. Then, 
$hd(w)\vdash  \K_C\phi$ by the assumption $\K_C\psi_1,\dots,\K_C\psi_n\in hd(w)$. Thus, $\K_C\phi\in hd(w)$ because set $hd(w)$ is maximal, which contradicts the assumption of the lemma.
\end{proof-of-claim}
Let $X'$ be any maximal consistent extension of set $X$ and $u$ be the sequence $w::C::X'$. Then, $w\in W$ by Definition~\ref{canonical state} as well as the choice of sets $X$ and $X'$. 

Finally, $\neg\phi\in X\subseteq X'=hd(u)$ by the choice of sets $X$ and $X'$ and the choice of sequence $u$. Therefore, $\phi\notin hd(u)$ because set $hd(u)$ is consistent.
\end{proof}

\begin{lemma}\label{H all lemma}
For any formula $\H_C\phi\in hd(w)$ and any three states $w',u,u'\in W$, if $w\sim_C w'$, $(w',C,\phi,u)\in M$, and $u\sim_C u'$, then $\phi\in hd(u')$.
\end{lemma}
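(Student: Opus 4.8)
The plan is to chain together the Strategic Introspection axiom, the definition of the canonical mechanism, and Lemma~\ref{K all lemma}, transporting the relevant formulae along the indistinguishability relations in the tree. The argument is essentially a direct unwinding of the definitions, so I expect no serious obstacle; the only point requiring a little care is making sure that the formula being propagated at each stage is the right one ($\H_C\phi$ before the operation, $\K_C\phi$ immediately after, and finally $\phi$ at the indistinguishable endpoint).

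First I would use the assumption $\H_C\phi\in hd(w)$ together with the Strategic Introspection axiom $\H_C\phi\to\K_C\H_C\phi$ and the Modus Ponens inference rule to obtain $hd(w)\vdash\K_C\H_C\phi$, and hence $\K_C\H_C\phi\in hd(w)$ by the maximality of $hd(w)$. Next, since $w\sim_C w'$, Lemma~\ref{K all lemma} (applied to the formula $\H_C\phi$) yields $\H_C\phi\in hd(w')$.

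Then I would invoke the assumption $(w',C,\phi,u)\in M$. By Definition~\ref{canonical mechanism}, since $\H_C\phi\in hd(w')$, this gives $\K_C\phi\in hd(u)$. Finally, from $u\sim_C u'$ and Lemma~\ref{K all lemma} (now applied to the formula $\phi$), I conclude $\phi\in hd(u')$, which completes the proof. The whole argument is three applications of known facts (Strategic Introspection, the mechanism definition) separated by two applications of Lemma~\ref{K all lemma}, so the main ``work'' is simply invoking the right statement at the right step rather than any nontrivial calculation.
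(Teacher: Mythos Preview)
Your proposal is correct and follows essentially the same argument as the paper's own proof: Strategic Introspection to get $\K_C\H_C\phi\in hd(w)$, Lemma~\ref{K all lemma} to push $\H_C\phi$ to $hd(w')$, Definition~\ref{canonical mechanism} to obtain $\K_C\phi\in hd(u)$, and Lemma~\ref{K all lemma} again to conclude $\phi\in hd(u')$.
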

\begin{proof}
Assumption $\H_C\phi\in hd(w)$ implies that $hd(w)\vdash \K_C\H_C\phi$ by the Strategic Introspection axiom and the Modus Ponens inference rule. Hence, $\K_C\H_C\phi\in hd(w)$ because set $hd(w)$ is maximal. Thus, $\H_C\phi\in hd(w')$ by Lemma~\ref{K all lemma} and the assumption $w\sim_C w'$. Then, $\K_C\phi\in hd(u)$ by Definition~\ref{canonical mechanism} and the assumption  $(w',C,\phi,u)\in M$. Therefore,  $\phi\in hd(u')$ by Lemma~\ref{K all lemma} and the assumption  $u\sim_C u'$.
\end{proof}
\begin{lemma}\label{H exists lemma}
If $\H_F\phi\notin hd(w)$, then for any nonempty coalition $E\subseteq F$ and any action $\delta\in \Delta$, there  are states $w',u,u'$ such that $w\sim_F w'$, $(w',E,\delta,u)\in M$, $u\sim_F u'$, and $\phi\notin hd(u')$.
\end{lemma}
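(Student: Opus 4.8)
The plan is to prove the contrapositive-flavored "witness existence" claim directly by constructing the states $w'$, $u$, $u'$ by hand and then feeding them to the canonical mechanism $M$. Fix a nonempty coalition $E\subseteq F$ and an action $\delta\in\Delta$; note $\delta$ is itself a formula since $\Delta=\Phi$. The first step is to locate $w'$ with $w\sim_F w'$ such that $\Box_E\delta\notin hd(w')$. Here I expect to use the hypothesis $\Box_F\phi\notin hd(w)$ together with the Empty Coalition and Coalition-Informant-Adversary axioms: roughly, if $\Box_E\delta$ held along every $F$-indistinguishable state, one could transport this knowledge via $\K_F$ and combine with the CIA principle (taking the informant set to be $F\setminus E$ and the adversary set empty, which is legitimate since $E\cap(F\setminus E)=\varnothing$) to derive $\Box_F\phi\in hd(w)$, a contradiction. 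More precisely, I would show that $\K_F(\Box_E\delta)$ being "almost forced" lets us conclude $\Box_F\phi$ from $\Box_F(\text{something})$ — so I expect to need an intermediate lemma saying that if $\neg\Box_E\delta$ fails everywhere $F$-reachable, then $\K_F\neg\neg\Box_E\delta$, hence $\Box_E\delta$ is known, and then CIA (or directly Lemma~\ref{KKK lemma}) upgrades it. Concretely: since $\Box_E\delta$ is a formula, the existence of a bad $w'$ amounts to $\neg\K_F\Box_E\delta\notin$ forcing, i.e. we want $w\not\Vdash_{\text{syntactic}}\K_F\Box_E\delta$; if instead $\K_F\Box_E\delta\in hd(w)$ we derive $\Box_F\phi\in hd(w)$ using the CIA axiom with $\psi:=\phi$, $\phi:=\delta$ after noting $\K_F\delta\to\K_F\delta$ is trivially a theorem, so its $\K_{F}\K_{F}$-necessitation is in $hd(w)$, and monotonicity plus CIA yields $\Box_F\delta$, but that is not yet $\Box_F\phi$ — so the real content is that $\Box_E\delta$ known by $F$ forces, via the mechanism, the conclusion $\phi\notin$ somewhere, which is exactly what the lemma claims.

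Once $w'$ with $\Box_E\delta\notin hd(w')$ and $w\sim_F w'$ is in hand, the second step is to produce the transition. By the nontermination lemma just proved, there is $u\in W$ with $(w',E,\delta,u)\in M$; but I need more control, so instead I would mimic Case~I of that lemma's proof: since $\Box_E\delta\notin hd(w')$, the set $\{\neg\K_E\delta\}\cup\{\psi\mid \K_{\overline{E}}\psi\in hd(w')\}$ is consistent — this uses Lemma~\ref{K neg K to neg H lemma} in the same way Case~I used it, running the argument in reverse to show that inconsistency would force $\K_{\overline E}\neg\K_E\delta\in hd(w')$ hence $\neg\Box_E\delta$'s negation, i.e. $\Box_E\delta\in hd(w')$, contradiction. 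Extend to a maximal consistent $X'$, set $u:=w'::\overline{E}::X'$; then $u\in W$, $w'\sim_{\overline E}u$ (so $(w',E,\delta,u)\in M$ by Definition~\ref{canonical mechanism}, the implication $\H_E\delta\in hd(w')\Rightarrow\K_E\delta\in hd(u)$ being vacuously true), and crucially $\K_E\delta\notin hd(u)$.

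The third step is to use $\K_E\delta\notin hd(u)$ to find $u'$ with $u\sim_F u'$ and $\phi\notin hd(u')$. Since $E\subseteq F$, monotonicity gives $\K_F\phi\to\K_E\phi$-style reasoning; but I actually want to relate $\K_E\delta\notin hd(u)$ to $\phi$, and here I think the cleanest route is: the mechanism's defining implication at $(w',E,\delta,u)$ is only nontrivial when $\Box_E\delta\in hd(w')$, which we've arranged to be false, so $u$ is essentially "free," and I should instead have chosen the operation $\delta$ and state $u$ so that $\phi$ genuinely fails — wait, this suggests the correct reading is that we get to choose $u'$ using Lemma~\ref{K exists lemma}-type freedom. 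Reconsidering, I would argue: from $\K_F\phi\notin hd(w)$ — no, that is not a hypothesis. The honest statement is that $\Box_F\phi\notin hd(w)$ must, by a dual of Lemma~\ref{H all lemma}, give a bad triple; so I expect the actual proof to first establish via Lemma~\ref{KKK lemma} and Lemma~\ref{super distributivity} that $\K_F\K_E\K_F\phi\notin hd(w)$, then peel off three layers of knowledge using Lemma~\ref{K exists lemma} (applied along $F$-, then $E$-, then $F$-edges) to build $w'$, then $u$, then $u'$ with $\phi\notin hd(u')$, and finally check $(w',E,\delta,u)\in M$ holds by verifying $\H_E\delta\notin hd(w')$ or $\K_E\delta\in hd(u)$ — whichever the construction makes available.

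The main obstacle, as the scattered reasoning above reveals, is correctly matching the construction of $u$ to the mechanism's conditional: we need either $\H_E\delta\notin hd(w')$ at the chosen $w'$, or $\K_E\delta\in hd(u)$ at the chosen $u$, and simultaneously $\phi\notin hd(u')$ at the chosen $u'$. Reconciling the "$\delta$ is an arbitrary given action" with "we want $\phi$ to fail" is the crux; I expect the resolution is that the hypothesis $\Box_F\phi\notin hd(w)$, unpacked through Lemma~\ref{KKK lemma} (which says $\K_F\K_E\K_F\phi\to\Box_F\phi$ whenever $F\nsubseteq E$ — but here $E\subseteq F$, so one must instead use it with a different witness, or use $F\nsubseteq E$ forced by $E$ nonempty only when $E\subsetneq F$, and handle $E=F$ separately), yields a descending chain of maximal consistent sets killing $\phi$, into which the edge labels $F,\overline{E}$ (or $F, E, F$) fit; then $(w',E,\delta,u)\in M$ is forced vacuously because along the way we can keep $\H_E\delta\notin hd(w')$. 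I would budget most of the writing effort for this bookkeeping and for the case split on whether $E=F$.
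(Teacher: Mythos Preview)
Your proposal correctly identifies the crux --- reconciling the fact that $\delta$ is an \emph{arbitrary} given action with the need for $\phi$ to fail at $u'$ --- but it does not resolve it. You repeatedly try to arrange $\Box_E\delta\notin hd(w')$ so that the mechanism condition in Definition~\ref{canonical mechanism} becomes vacuous, but this is not always possible: for instance, if $\delta=\top$ and $E\neq\varnothing$ then $\Box_E\delta$ is a theorem and lies in \emph{every} maximal consistent set. More generally, the set $Y=\{\neg\Box_E\delta,\neg\K_{\overline{E}}\K_F\phi\}\cup\{\psi\mid\K_F\psi\in hd(w)\}$ that your ``peel off three layers'' construction implicitly needs to be consistent may simply fail to be consistent. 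Your claim that ``along the way we can keep $\H_E\delta\notin hd(w')$'' is thus unjustified.

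The paper's proof handles this with a genuine case split that your plan never locates. It defines a \emph{second} candidate set $X=\{\neg\K_{\overline{E}}(\K_E\delta\to\K_F\phi)\}\cup\{\psi\mid\K_F\psi\in hd(w)\}$ and proves that at least one of $X,Y$ is consistent. If $X$ is consistent (Case~I), one builds $w'$ from $X$ and then gets $u$ with $w'\sim_{\overline{E}}u$, $\K_E\delta\in hd(u)$, and $\K_F\phi\notin hd(u)$; here the mechanism condition is satisfied \emph{non-vacuously} because $\K_E\delta\in hd(u)$. If $Y$ is consistent (Case~II), one proceeds as you imagined. The heart of the proof is the claim that $X$ and $Y$ cannot both be inconsistent: assuming they are, one derives $hd(w)\vdash\K_F\K_{\overline{E}}(\K_E\delta\to\K_F\phi)$ from the inconsistency of $X$, and $hd(w)\vdash\Box_E\delta\vee\K_F\K_{\overline{E}}\K_F\phi$ from the inconsistency of $Y$ (via Lemma~\ref{K over vee lemma}); the second disjunct gives $\Box_F\phi$ by Lemma~\ref{KKK lemma} applied with $\overline{E}$ in place of $E$ (note $F\nsubseteq\overline{E}$ since $E$ is nonempty and $E\subseteq F$), while the first disjunct combined with the $\K_F\K_{\overline{E}}(\cdots)$ statement yields $\Box_F\phi$ via the Coalition-Informant-Adversary axiom with $C=E$, $I=F\setminus E$, $A=\overline{F}$. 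Either way $\Box_F\phi\in hd(w)$, contradicting the hypothesis. This two-pronged consistency argument is the missing idea in your plan.
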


In the proof of this lemma located below, we consecutively construct states $w'$, $u$, and $u'$. To guarantee that state $u'$ could be constructed after state $u$, we construct a state $u$ such that set $hd(u)$ contains formula $\neg\K_F\phi$. In this case, by Lemma~\ref{K exists lemma}, there must exist a state $u'$ such that $u\sim_F u'$, and $\phi\notin hd(u')$.

One might think that state $u$ could be constructed from state $w'$ in a similar fashion by guaranteeing first that set $hd(w')$ contains formula $\neg\K_{\overline{E}}\K_F\phi$. 
However, there is a problem because Definition~\ref{canonical mechanism} states that if set $hd(w')$ contains formula $\Box_E\delta$, then set $hd(u)$, in addition to formula $\neg\K_F\phi$, must also contain formula $\K_E\delta$. Thus, there are two possible ways sets $hd(w')$ and $hd(u)$ could be constructed:
\begin{enumerate}
    \item[I.] Set $hd(u)$ contains $\neg\K_F\phi$ and  $\K_E\delta$. In this case, set $hd(w')$ must contain formula $\neg\K_{\overline{E}}\,\neg(\neg\K_F\phi\wedge\K_E\delta)$. The last formula is equivalent to $\neg\K_{\overline{E}} (\K_{E}\delta\to\K_F\phi )$,
    \item[II.] Set $hd(u)$ contains only formula $\neg\K_F\phi$. In this case, set $hd(w')$ must contain formulae $\neg\K_{\overline{E}}\K_F\phi$ and $\neg\Box_E\delta$. 
\end{enumerate}
We visualise these two cases on the diagram in Figure~\ref{pink-blue figure}.

\begin{figure}[ht]
\begin{center}
\scalebox{0.45}{\includegraphics{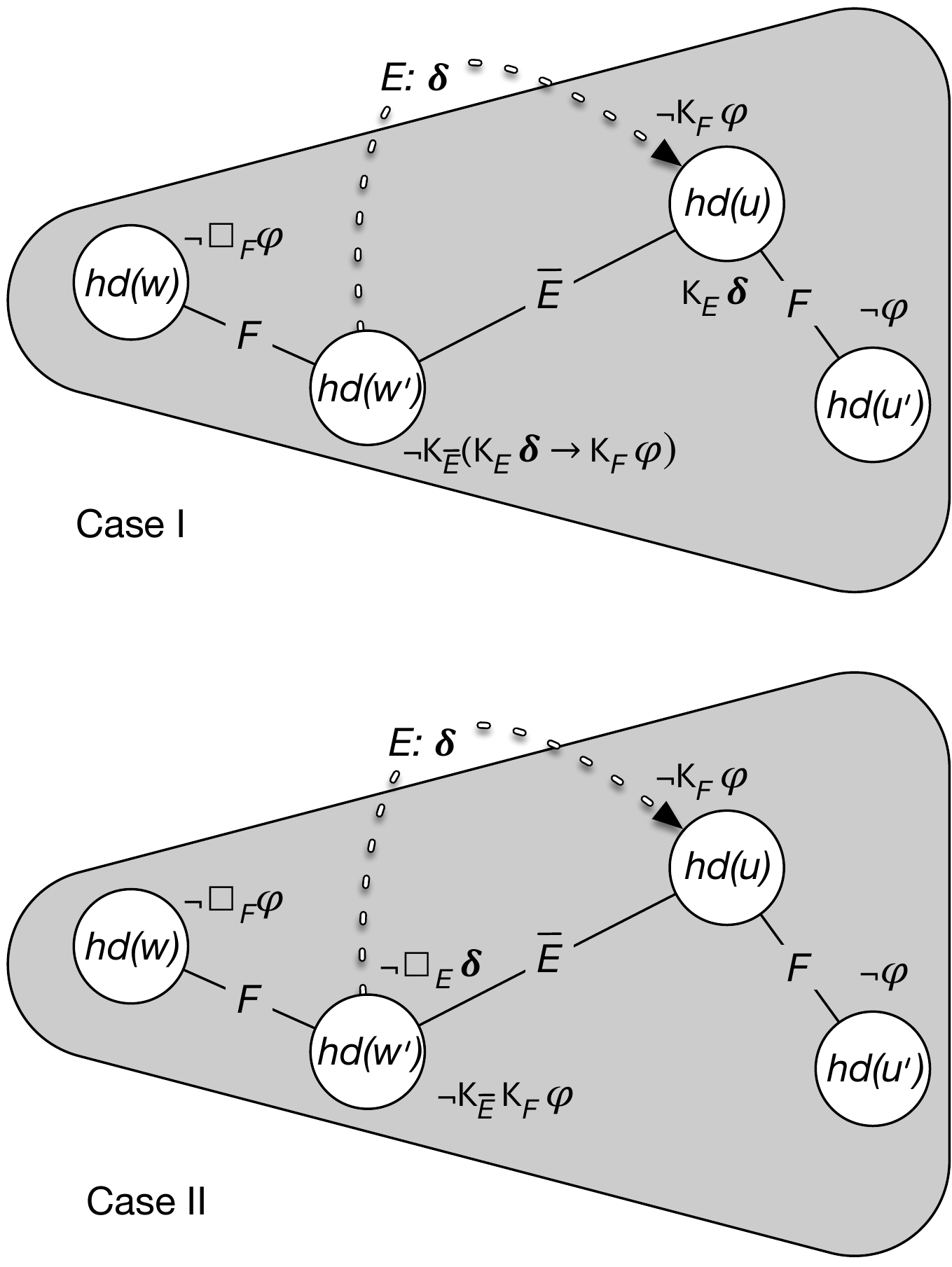}}
\caption{Towards the Proof of Lemma~\ref{H exists lemma}.}\label{pink-blue figure}
\end{center}
\end{figure}

Unfortunately, there is no way to decide upfront which of these two ways could be used to construct a consistent set $hd(w')$. Thus, in the proof below we attempt to concurrently construct both versions of the set $hd(w')$ and prove that one of the two attempts succeeds by resulting in a consistent set $hd(w)$. Finally, note that in both cases we must also guarantee that $w\sim_F w'$. To achieve this, we include in set $hd(w')$ all such formulae $\psi$ that $\K_F\psi\in hd(w)$. 

In the proof below, the two different attempts to create a set $hd(w')$ are carried out by defining sets $X$ and $Y$ and proving that at least one of them is consistent. Set $hd(w')$ is later defined as a maximal consistent extension of either set $X$ or set $Y$ depending on which one is consistent. 
\begin{proof}
Consider the following two sets of formulae:
\begin{eqnarray*}
X&=&\{\neg\K_{\overline{E}} (\K_{E}\delta\to\K_F\phi )\}\cup\{\psi\;|\;\K_F\psi\in hd(w)\},\\
Y&=&\{\neg\H_{E}\delta,\neg\K_{\overline{E}}\K_F\phi\}\cup\{\psi\;|\;\K_F\psi\in hd(w)\}.
\end{eqnarray*}
\begin{claim}
Either set $X$ or set $Y$ is consistent.
\end{claim}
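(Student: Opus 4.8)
The strategy is a proof by contradiction: assume both $X$ and $Y$ are inconsistent, and derive a contradiction with the hypothesis $\H_F\phi \notin hd(w)$ of Lemma~\ref{H exists lemma}. From the inconsistency of $X$, there are formulae $\K_F\psi_1,\dots,\K_F\psi_k \in hd(w)$ with $\psi_1,\dots,\psi_k \vdash \K_{\overline{E}}(\K_E\delta \to \K_F\phi)$, which by Lemma~\ref{super distributivity} and maximality gives $hd(w) \vdash \K_F\K_{\overline{E}}(\K_E\delta \to \K_F\phi)$. Similarly, from the inconsistency of $Y$, using formulae $\K_F\chi_1,\dots,\K_F\chi_m \in hd(w)$, we get $\chi_1,\dots,\chi_m \vdash \neg(\neg\H_E\delta \wedge \neg\K_{\overline{E}}\K_F\phi)$, i.e.\ $\vdash \H_E\delta \vee \K_{\overline{E}}\K_F\phi$ from those premises, so $hd(w) \vdash \K_F(\H_E\delta \vee \K_{\overline{E}}\K_F\phi)$ by Lemma~\ref{super distributivity}.

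Next I would combine these two facts. From $hd(w) \vdash \K_F(\H_E\delta \vee \K_{\overline{E}}\K_F\phi)$, since $E \subseteq F$ (so $\overline F \subseteq \overline E$, and one checks $\overline E \cap E = \varnothing$), I want to pull the disjunction apart. The tool for this is Lemma~\ref{K over vee lemma} in the form $\vdash \K_F(\K_{\overline E}\K_F\phi \vee \H_E\delta) \to \K_{\overline E}\K_F\phi \vee \K_F\H_E\delta$, which requires $\overline E \subseteq F$; that holds because $\overline E \supseteq \overline F$ fails in general — so instead I should apply Lemma~\ref{K over vee lemma} with the roles arranged so the inner knowledge operator's coalition is contained in $F$. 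Actually the correct reading: we have $\K_F(\alpha \vee \beta)$ where $\beta = \K_{\overline E}\K_F\phi$; to split we need the $\K$-coalition inside $\beta$, namely $\overline E$, to satisfy $\overline E \subseteq F$, equivalently $\overline F \subseteq E$. This need not hold, so the genuinely delicate point is choosing the right instance. The intended route is surely: $hd(w) \vdash \H_E\delta \vee \K_F\K_{\overline E}\K_F\phi$ after splitting appropriately, then apply the Empty-Coalition axiom or Lemma~\ref{KKK lemma} (with $F \nsubseteq \overline E$, which holds since $E \subseteq F$ and $E \neq \varnothing$ force $F \not\subseteq \overline E$) to the second disjunct to get $\K_F\K_{\overline E}\K_F\phi \vdash \H_F\phi$, hence $hd(w) \vdash \H_E\delta \vee \H_F\phi$. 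Monotonicity of $\H$ (derivable, or via the axioms) turns $\H_E\delta$ into $\H_F\delta$ — wait, that is not what we want; rather we need $\H_E\delta \to \H_F\phi$. This is where the first fact $hd(w) \vdash \K_F\K_{\overline E}(\K_E\delta \to \K_F\phi)$ enters, via the Coalition-Informant-Adversary axiom with coalition $E$, informants $F \setminus E$, and empty adversary set: $\K_{E,F\setminus E}\K_{\varnothing,F\setminus E}(\K_E(\text{something}) \to \K_{E,F\setminus E}(\ldots)) \to (\H_E\delta' \to \H_{E,F\setminus E}\ldots)$, i.e.\ with $E \cup (F\setminus E) = F$ and $\overline{E} \supseteq F \setminus E$. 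The precise matching of $\K_{\overline E}(\K_E\delta \to \K_F\phi)$ to the CIA antecedent is the crux.

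So the main obstacle, and where I would spend the most care, is the exact bookkeeping in this final combination: identifying the coalition $C := E$, informants $I := F\setminus E$, adversary $A := \varnothing$ (or absorbing $\overline E$ appropriately) so that the CIA axiom yields $\H_E\delta \to \H_F\phi$ from $hd(w)\vdash\K_F\K_{\overline E}(\K_E\delta\to\K_F\phi)$, together with the side fact $hd(w)\vdash\H_E\delta \vee \H_F\phi$, to conclude $hd(w)\vdash\H_F\phi$, contradicting $\H_F\phi \notin hd(w)$ and the consistency of $hd(w)$. The other steps — extracting the finite sets of witnesses from inconsistency, applying Lemma~\ref{super distributivity} to push $\K_F$ outside, and using maximality of $hd(w)$ — are routine. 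I expect the CIA instantiation to require rewriting $\K_{\overline E}(\K_E\delta\to\K_F\phi)$ as $\K_{A,I}(\K_C\delta\to\K_{C,I}\phi)$ for suitable $A,I$ with $C=E$, $C\cup I = F$, using Monotonicity of $\K$ to enlarge $\K_F\phi$-style subformulae or shrink index sets as needed, plus the extra outer $\K_{C,I}=\K_F$ demanded by the axiom's left side, which is exactly the $\K_F(\cdots)$ we already produced.
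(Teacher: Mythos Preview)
Your overall architecture matches the paper's proof exactly: assume both $X$ and $Y$ inconsistent, extract finitely many $\K_F$-witnesses from $hd(w)$, push $\K_F$ outside via Lemma~\ref{super distributivity}, split the disjunction with Lemma~\ref{K over vee lemma}, kill the $\K_F\K_{\overline E}\K_F\phi$ disjunct with Lemma~\ref{KKK lemma}, and finish with the Coalition--Informant--Adversary axiom. But there are two concrete gaps.

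First, you correctly diagnose that Lemma~\ref{K over vee lemma} does not apply directly to $\K_F(\H_E\delta \vee \K_{\overline E}\K_F\phi)$, since neither disjunct is of the form $\K_{E'}\alpha$ with $E'\subseteq F$. You then write ``after splitting appropriately'' without saying how. The fix, which the paper uses and which you never mention, is the Strategic Introspection axiom: from $\psi'_1,\dots,\psi'_n \vdash \H_E\delta \vee \K_{\overline E}\K_F\phi$ one first gets $\psi'_1,\dots,\psi'_n \vdash \K_E\H_E\delta \vee \K_{\overline E}\K_F\phi$, and only then applies Lemma~\ref{super distributivity} to obtain $hd(w)\vdash \K_F(\K_E\H_E\delta \vee \K_{\overline E}\K_F\phi)$. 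Now Lemma~\ref{K over vee lemma} applies with the $\K_E$-disjunct (since $E\subseteq F$), yielding $hd(w)\vdash \K_E\H_E\delta \vee \K_F\K_{\overline E}\K_F\phi$, and Truth strips the $\K_E$.

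Second, your CIA instantiation with $A=\varnothing$ does not work. With $C=E$, $I=F\setminus E$, $A=\varnothing$, the antecedent of CIA is $\K_F\K_{F\setminus E}(\K_E\delta\to\K_F\phi)$, but what you have is $\K_F\K_{\overline E}(\K_E\delta\to\K_F\phi)$. Since $F\setminus E\subseteq \overline E$, Monotonicity only gives $\K_{F\setminus E}\alpha\to\K_{\overline E}\alpha$, the wrong direction. The correct choice is $A=\overline F$: then $A\cup I=\overline F\cup(F\setminus E)=\overline E$ and $C\cup I=F$, so the CIA antecedent becomes exactly $\K_F\K_{\overline E}(\K_E\delta\to\K_F\phi)$, matching what you derived from the inconsistency of $X$.
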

\begin{proof-of-claim}
Suppose the opposite. Thus, there are 
\begin{equation}\label{phis and phi's}
    \K_F\psi_1,\dots,\K_F\psi_m,\K_F\psi'_1,\dots,\K_F\psi'_n\in hd(w)
\end{equation}
such that
\begin{eqnarray*}
\psi_{1},\dots,\psi_m&\vdash&\K_{\overline{E}} (\K_{E}\delta\to\K_F\phi ),\\
\psi'_1,\dots,\psi'_n&\vdash& \H_{E}\delta\vee\K_{\overline{E}}\K_F\phi.
\end{eqnarray*}
Then, by the Strategic Introspection axiom,
\begin{eqnarray*}
\psi_{1},\dots,\psi_m&\vdash&\K_{\overline{E}} (\K_{E}\delta\to\K_F\phi ),\\
\psi'_1,\dots,\psi'_n&\vdash& \K_E\H_{E}\delta\vee\K_{\overline{E}}\K_F\phi.
\end{eqnarray*}
Hence, by Lemma~\ref{super distributivity},
\begin{eqnarray*}
\K_F\psi_{1},\dots,\K_F\psi_m&\vdash&\K_F\K_{\overline{E}}(\K_{E}\delta\to\K_F\phi ),\\
\K_F\psi'_1,\dots,\K_F\psi'_n&\vdash& \K_F(\K_E\H_{E}\delta\vee\K_{\overline{E}}\K_F\phi).
\end{eqnarray*}
Thus, by assumption~(\ref{phis and phi's}),
\begin{eqnarray}
hd(w)&\vdash&\K_F\K_{\overline{E}} (\K_{E}\delta\to\K_F\phi )\label{KK statement},\\
hd(w)&\vdash& \K_F(\K_E\H_{E}\delta\vee\K_{\overline{E}}\K_F\phi).\nonumber
\end{eqnarray}
The last statement, by Lemma~\ref{K over vee lemma}, assumption $E\subseteq F$ of the lemma, and the Modus Ponens inference rule, implies that
$$
hd(w)\vdash \K_E\H_E\delta\vee \K_F\K_{\overline{E}}\K_F\phi.
$$
Then, by the Truth axiom and propositional reasoning,
\begin{equation}\label{vee formula}
    hd(w)\vdash \H_E\delta\vee \K_F\K_{\overline{E}}\K_F\phi.
\end{equation}
Recall that set $E$ is nonempty by the assumption of the lemma. Thus, there is at least one $e\in E$. Then, $e\in F$ by the assumption $E\subseteq F$ of the lemma. Hence, $e\in F\setminus\overline{E}$. Thus, $F\nsubseteq \overline{E}$. Then, $\vdash \K_F\K_{\overline{E}}\K_F\phi\to \H_F\phi$ by Lemma~\ref{KKK lemma}. At the same time, $hd(w)\nvdash \H_F\phi$ by the assumption $\H_F\phi\notin hd(w)$ of the lemma and the maximality of the set $hd(w)$. Then, $hd(w)\nvdash \K_F\K_{\overline{E}}\K_F\phi$ by the contraposition of the Modus Ponens inference rule. Hence, $\neg\K_F\K_{\overline{E}}\K_F\phi\in hd(w)$ because set $hd(w)$ is maximal. Thus, by propositional reasoning using statement~(\ref{vee formula}),
\begin{equation}\label{H E delta}
    hd(w)\vdash \H_E\delta.
\end{equation}
At the same time, assumption $E\subseteq F$ of the lemma implies that $(F\setminus E)\cup \overline{F}= \overline{E}$. Then, $E\cap ((F\setminus E)\cup \overline{F})=E\cap \overline{E}=\varnothing$.
Hence, the following formula
$$
\K_{E,F\setminus E}\K_{\overline{F},F\setminus E}(\K_E\delta\!\to\!\K_{E,F\setminus E}\phi)\to (\H_E\delta\!\to\!\H_{E,F\setminus E}\phi)
$$
is an instance of the Coalition-Informant-Adversary axiom
where $C=E$, $I=F\setminus E$, and $A=\overline{F}$. Thus, using statement~(\ref{H E delta}) and propositional reasoning,
$$
hd(w)\vdash \K_{E,F\setminus E}\K_{\overline{F},F\setminus E}(\K_E\delta\!\to\!\K_{E,F\setminus E}\phi)\to \H_{E,F\setminus E}\phi.
$$
Note that $E\cup(F\setminus E)=F$ and $\overline{F}\cup (F\setminus E) = \overline{E}$ by the assumption $E\subseteq F$ of the lemma. In other words,
$$
hd(w)\vdash \K_{F}\K_{\overline{E}}(\K_E\delta\to\K_{F}\phi)\to \H_{F}\phi.
$$
Then, 
$
hd(w)\vdash \H_{F}\phi
$
by statement~(\ref{KK statement}) and the Modus Ponens inference rule. Therefore, $\H_F\phi\in hd(w)$ because set $hd(w)$ is maximal, which contradicts assumption $\H_F\phi\notin hd(w)$ of the lemma.
\end{proof-of-claim}

The claim that we just proved states that either set $X$ or set $Y$ is consistent. We consider these two cases separately.

\vspace{1mm}
\noindent{\bf Case I:} set $X$ is consistent. Let $X'$ be any maximal consistent extension of the set $X$ and let state $w'$ be the sequence $w::F::X'$. Note that $w\in W$ by Definition~\ref{canonical state} and the choice of set $X$, set $X'$, and sequence $w'$. Also, $w\sim_F w'$ by Definition~\ref{canonical sim} and the choice of sequence $w'$.

Note that $\neg\K_{\overline{E}} (\K_{E}\delta\to\K_F\phi )\in X\subseteq X'=hd(w')$ by the choice of set $X$, set $X'$, and sequence $w'$. Thus, $\K_{\overline{E}} (\K_{E}\delta\to\K_F\phi )\notin hd(w')$ because set $hd(w')$ is consistent. Hence, by Lemma~\ref{K exists lemma}, there is a state $u\in W$ such that $w'\sim_{\overline{E}} u$ and $\K_{E}\delta\to\K_F\phi\notin hd(u)$. Then, $\K_{E}\delta\in hd(u)$ and $\K_F\phi\notin hd(u)$ because $hd(u)$ is a maximal consistent set. 
Statements $w'\sim_{\overline{E}} u$ and $\K_{E}\delta\in hd(u)$ imply that $(w',E,\delta,u)\in M$ by Definition~\ref{canonical mechanism}.
Statement $\K_F\phi\notin hd(u)$ implies that there is a state $u'\in W$ such that $u\sim_F u'$ and $\phi\notin hd(u')$ by Lemma~\ref{K exists lemma}.

\vspace{1mm}
\noindent{\bf Case II:} set $Y$ is consistent. Let $Y'$ be any maximal consistent extension of the set $Y$ and let state $w'$ be the sequence $w::F::X'$. As in the previous case, $w\in W$ by Definition~\ref{canonical state} and the choice of set $Y$, set $Y'$, and sequence $w'$. Also, $w\sim_F w'$ by Definition~\ref{canonical sim} and the choice of $w'$.

Note that $\neg\K_{\overline{E}}\K_F\phi \in Y \subseteq Y' = hd(w')$ by the choice of set $Y$, set $Y'$, and sequence $w'$. Thus, $\K_{\overline{E}}\K_F\phi \notin hd(w')$ as set $hd(w')$ is maximal consistent. Hence, by Lemma~\ref{K exists lemma}, there is a state $u\in W$ such that 
\begin{equation}\label{two facts}
    w'\sim_{\overline{E}} u \;\;\;\mbox{ and }\;\;\;\K_F\phi\notin hd(u).
\end{equation}

At the same time, $\neg\H_E\delta\in Y\subseteq Y'=hd(w')$ by the choice of set $Y$, set $Y'$, and sequence $w'$. Thus, $\H_E\delta\notin hd(w')$ because set $hd(w')$ is consistent. Then, $(w',E,\delta,u)\in M$ by Definition~\ref{canonical mechanism} and because $w'\sim_{\overline{E}} u$ by statement~(\ref{two facts}).

Finally, $K_F\phi \notin hd(u)$ by statement~(\ref{two facts}). Therefore, by Lemma~\ref{K exists lemma}, there exists a state $u' \in W$ such that $u\sim_F u'$ and $\phi\notin hd(u')$.
This concludes the proof of the lemma.
\end{proof}

The next ``truth lemma'' follows from the four previous lemmas in the standard way. Due to the space constraint, we give its proof in the appendix.

\begin{lemma}\label{induction lemma}
$w\Vdash \phi$ iff $\phi\in hd(w)$.
\end{lemma}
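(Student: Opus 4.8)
The plan is to prove Lemma~\ref{induction lemma} by induction on the structural complexity of $\phi$, in the standard manner for a truth lemma, with the five preceding lemmas (\ref{edge transport lemma}--\ref{H exists lemma}) carrying all of the substantive work in the modal cases.

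For the base case $\phi=p$ I would read off the equivalence directly from item~1 of Definition~\ref{sat} and Definition~\ref{canonical pi}. The Boolean cases $\phi=\neg\psi$ and $\phi=\psi_1\to\psi_2$ are then routine consequences of the induction hypothesis together with the fact that $hd(w)$ is a maximal consistent set (closed under Modus Ponens, and containing exactly one of $\psi$, $\neg\psi$). For $\phi=\K_C\psi$: if $\K_C\psi\in hd(w)$, then every $u$ with $w\sim_C u$ satisfies $\psi\in hd(u)$ by Lemma~\ref{K all lemma}, hence $u\Vdash\psi$ by the induction hypothesis, so $w\Vdash\K_C\psi$; conversely, if $\K_C\psi\notin hd(w)$, Lemma~\ref{K exists lemma} supplies a $\sim_C$-related state $u$ with $\psi\notin hd(u)$, and the induction hypothesis gives $u\nVdash\psi$, so $w\nVdash\K_C\psi$.

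The case $\phi=\H_C\psi$ is the interesting one. When $C=\varnothing$ both sides are false: the semantics in item~5 of Definition~\ref{sat} requires a \emph{nonempty} subcoalition $C'\subseteq C$, so $w\nVdash\H_\varnothing\psi$ vacuously, while $\H_\varnothing\psi\notin hd(w)$ by the Empty Coalition axiom and the consistency of $hd(w)$. When $C\neq\varnothing$ and $\H_C\psi\in hd(w)$, I would exhibit the uniform witness $C'=C$ and $\delta=\psi$ (legitimate since $\Delta$ is the set of all formulae): for any $w',u,u'$ with $w\sim_C w'$, $(w',C,\psi,u)\in M$, and $u\sim_C u'$, Lemma~\ref{H all lemma} yields $\psi\in hd(u')$, hence $u'\Vdash\psi$ by the induction hypothesis, so $w\Vdash\H_C\psi$. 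Conversely, when $\H_C\psi\notin hd(w)$, for an arbitrary nonempty $C'\subseteq C$ and arbitrary $\delta\in\Delta$ I would invoke Lemma~\ref{H exists lemma} (with $F=C$, $E=C'$) to obtain $w',u,u'$ with $w\sim_C w'$, $(w',C',\delta,u)\in M$, $u\sim_C u'$, and $\psi\notin hd(u')$; the induction hypothesis gives $u'\nVdash\psi$, so no admissible pair $(C',\delta)$ witnesses $\H_C\psi$, and $w\nVdash\H_C\psi$.

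The main obstacle is precisely the $\H_C\psi$ clause, because its semantics alternates an existential quantifier over $(C',\delta)$ with a universal quantifier over the three states $w',u,u'$: the ``only if'' direction needs one fixed choice of $(C',\delta)$ that works against all triples of states -- provided by $C'=C$, $\delta=\psi$ and Lemma~\ref{H all lemma} -- whereas the ``if'' direction must defeat every admissible $(C',\delta)$ simultaneously, which is exactly the content of Lemma~\ref{H exists lemma}. Consequently the induction step itself is essentially bookkeeping; the only subtlety that is easy to overlook is the $C=\varnothing$ corner, which is the reason the Empty Coalition axiom is on the list.
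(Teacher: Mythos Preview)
Your proposal is correct and follows essentially the same approach as the paper: structural induction, with Lemmas~\ref{K all lemma}/\ref{K exists lemma} handling the $\K_C$ clause and Lemmas~\ref{H all lemma}/\ref{H exists lemma} handling the $\H_C$ clause via the witness $C'=C$, $\delta=\psi$. Your explicit treatment of the $C=\varnothing$ corner (invoking the Empty Coalition axiom) is in fact more careful than the paper's own write-up, which leaves that case implicit.
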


\noindent{\bf Theorem \ref{completeness}.} {\em 
If $X\nvdash\phi$, then there is  a state $w$ of a clandestine  game such that $w\Vdash\chi$ for each formula $\chi\in X$ and $w\nVdash\phi$.
}

\begin{proof}
If $X\nvdash\phi$, then set $X\cup\{\neg\phi\}$ is consistent. Let $w$ be any maximal consistent extension of this set. Then, $w\Vdash\chi$ for each formula $\chi\in X$ and $w\vdash \neg\phi$ by Lemma~\ref{induction lemma}. Therefore, $w\nvdash \phi$ by item 2 of Definition~\ref{sat}.
\end{proof}

\section{Conclusion}

In this paper, we proposed a sound and complete logical system that describe properties of clandestine power modality $\Box_C\phi$. A natural generalization of our work could be a study of ``partially-clandestine'' modality $\Box^F_C\phi$, that stands for ``coalition $C$ knows an operation that it can use to achieve $\phi$ unnoticeable to anyone outside (friendly) coalition $F$''. 

It is also possible to consider a broader class of clandestine operations that achieve a goal through several consecutive clandestine actions of the given coalition. This type of multi-step operations is similar to multi-step strategies studied in know-how logics~\cite{fhlw17ijcai,lw17icla,w17synthese,w15lori}. 

\bibliographystyle{named}
\bibliography{sp}


\clearpage

\appendix

\begin{center}\Large\sc
    Technical Appendix
\end{center}

\section{Soundness}

In this section, we prove the soundness of our logical system. The soundness of the Truth, the Negative Introspection, the Distributivity, and the Monotonicity axioms as well as of the Modus Ponens inference rule and the Necessitation inference rule for modality $\K$ are well-known results about epistemic logic of distributed knowledge~\cite{fhmv95}. Below we prove the soundness of each of the remaining axioms and the  Necessitation inference rule for modality $\Box$ as separate lemmas.

\begin{lemma}
If $w\Vdash \Box_C\phi$, then $w\Vdash \K_C\Box_C\phi$.
\end{lemma}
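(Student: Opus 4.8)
The plan is to unfold the semantics of both modalities from Definition~\ref{sat} and observe that the clause defining $w\Vdash\H_C\phi$ already quantifies over all states $\C$-indistinguishable from $w$, so the same witnesses transfer along any $\sim_C$-edge. First I would recall that $\sim_C$ is an equivalence relation: it is the intersection of the equivalence relations $\sim_a$ over $a\in C$, and in particular it is transitive. This transitivity is the only structural fact the argument needs.

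Next, assume $w\Vdash\H_C\phi$. By item~5 of Definition~\ref{sat}, fix a nonempty coalition $C'\subseteq C$ and an operation $\delta\in\Delta$ such that for all $w',u,u'\in W$, if $w\sim_C w'$, $(w',C',\delta,u)\in M$, and $u\sim_C u'$, then $u'\Vdash\phi$. To show $w\Vdash\K_C\H_C\phi$, by item~4 of Definition~\ref{sat} it suffices to take an arbitrary $v\in W$ with $w\sim_C v$ and prove $v\Vdash\H_C\phi$. I claim the \emph{same} $C'$ and $\delta$ witness this.

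To verify the claim, take any $v',u,u'\in W$ with $v\sim_C v'$, $(v',C',\delta,u)\in M$, and $u\sim_C u'$. Since $w\sim_C v$ and $v\sim_C v'$, transitivity of $\sim_C$ gives $w\sim_C v'$. Hence $w\sim_C v'$, $(v',C',\delta,u)\in M$, and $u\sim_C u'$, so by the defining property of the witnesses $C'$ and $\delta$ for $w\Vdash\H_C\phi$ we conclude $u'\Vdash\phi$. As $v',u,u'$ were arbitrary, $v\Vdash\H_C\phi$, and since $v$ was an arbitrary $\sim_C$-successor of $w$, we obtain $w\Vdash\K_C\H_C\phi$.

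There is essentially no obstacle here: the argument is a one-line application of transitivity of $\sim_C$, and the only subtlety worth flagging is that the "ex-ante'' quantifier over $w'$ in item~5 of Definition~\ref{sat} is exactly what makes $\H_C$ already $\K_C$-closed, so no case analysis on the mechanism $M$ or on the nonemptiness of $C'$ is required.
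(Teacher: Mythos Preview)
Your proof is correct and follows essentially the same approach as the paper: unfold item~5 of Definition~\ref{sat} to obtain witnesses $C'$ and $\delta$ at $w$, then for an arbitrary $v$ with $w\sim_C v$ use transitivity of $\sim_C$ to show the same $C'$ and $\delta$ witness $v\Vdash\H_C\phi$. The paper's argument is structurally identical, differing only in variable names.
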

\begin{proof}
Consider any state $v$ such that $w\sim_C v$. By item 4 of Definition~\ref{sat}, it suffices to show that $v\Vdash \Box_C\phi$.

By item 5 of Definition~\ref{sat}, assumption $w\Vdash \Box_C\phi$ implies that there is a nonempty coalition $C'\subseteq C$ and an operation $\delta\in\Delta$ such that for any  states $w',u,u'\in W$, 
\begin{equation}\label{aug27-long}
    w\sim_C w'\wedge (w',C',\delta,u)\in M\wedge u\sim_C u'\Rightarrow u'\Vdash \phi.
\end{equation}

Towards the proof of statement $v\Vdash \Box_C\phi$, consider any states $v',z,z'\in W$ such that $v\sim_C v'$, $(v',C',\delta,z)\in M$, and $z\sim_C z'$. By item 5 of Definition~\ref{sat}, it suffices to show that $z'\Vdash\phi$. Indeed, notice that $w\sim_C v'$ because $w\sim_C v$ and $v\sim_C v'$. Therefore, $z'\Vdash\phi$ due to statement~(\ref{aug27-long}).
\end{proof}

\begin{lemma}
If $w\Vdash \K_{C,I}\K_{A,I}(\K_C\phi\to\K_{C,I}\psi)$, $w\Vdash \H_C\phi$, and $C\cap (I\cup A)=\varnothing$, then $w\Vdash \H_{C,I}\psi$.
\end{lemma}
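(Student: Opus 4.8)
The plan is to verify the semantic clause for $\H_{C,I}\psi$ directly, reusing the strategy witnessing $w \Vdash \H_C\phi$. By item~5 of Definition~\ref{sat}, the assumption $w \Vdash \H_C\phi$ provides a nonempty coalition $C' \subseteq C$ and an operation $\delta \in \Delta$ such that $u' \Vdash \phi$ whenever $w \sim_C w'$, $(w', C', \delta, u) \in M$, and $u \sim_C u'$. Since $C'$ is nonempty and $C' \subseteq C \subseteq C \cup I$, it suffices to show that this same pair $(C', \delta)$ witnesses $w \Vdash \H_{C,I}\psi$; that is, for arbitrary states $w', u, u'$ with $w \sim_{C,I} w'$, $(w', C', \delta, u) \in M$, and $u \sim_{C,I} u'$, we have $u' \Vdash \psi$. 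So fix such $w', u, u'$.

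First I would establish $u \Vdash \K_C\phi$. Since $C \subseteq C \cup I$, the hypotheses $w \sim_{C,I} w'$ and $u \sim_{C,I} u'$ give $w \sim_C w'$ and $u \sim_C u'$. Now for any $u''$ with $u \sim_C u''$, applying the $\H_C\phi$-witness to $w, w', u, u''$ (legitimate because $w \sim_C w'$, $(w', C', \delta, u) \in M$, and $u \sim_C u''$) yields $u'' \Vdash \phi$; hence $u \Vdash \K_C\phi$ by item~4 of Definition~\ref{sat}.

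Second I would transport the implication $\K_C\phi \to \K_{C,I}\psi$ from $w$ to $u$ using the two knowledge modalities of the first assumption together with concealment. From $w \sim_{C,I} w'$ and $w \Vdash \K_{C,I}\K_{A,I}(\K_C\phi \to \K_{C,I}\psi)$, item~4 of Definition~\ref{sat} gives $w' \Vdash \K_{A,I}(\K_C\phi \to \K_{C,I}\psi)$. Next, the concealment condition~4(a) of Definition~\ref{game} applied to $(w', C', \delta, u) \in M$ yields $w' \sim_{\overline{C'}} u$; since $C' \subseteq C$ and $C \cap (I \cup A) = \varnothing$, we have $A \cup I \subseteq \overline{C} \subseteq \overline{C'}$, so in particular $w' \sim_{A,I} u$. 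Hence $u \Vdash \K_C\phi \to \K_{C,I}\psi$, and combining with $u \Vdash \K_C\phi$ from the previous paragraph we obtain $u \Vdash \K_{C,I}\psi$. Finally, $u \sim_{C,I} u'$ and item~4 of Definition~\ref{sat} give $u' \Vdash \psi$, as required.

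The argument is essentially bookkeeping about which coalition labels each indistinguishability step, so I do not anticipate a genuine obstacle; the only point that needs care — and the reason the axiom carries exactly this stack of modalities — is the interplay in the second step: the outer $\K_{C,I}$ moves the inner formula from $w$ to its coalition-$(C\cup I)$ alternative $w'$, while the inner $\K_{A,I}$ together with concealment (the clandestine transition by $C'$ is invisible to everyone in $\overline{C'} \supseteq A \cup I$) moves it further across the transition to $u$. Getting the inclusion $A \cup I \subseteq \overline{C'}$ right is the crux of the verification.
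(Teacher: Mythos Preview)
Your proof is correct and follows essentially the same approach as the paper's own proof: you reuse the witness $(C',\delta)$ for $\H_C\phi$, establish $u\Vdash\K_C\phi$ via the $\sim_C$-neighbours of $u$, and then push the implication from $w$ to $w'$ via $\K_{C,I}$ and from $w'$ across the transition to $u$ via $\K_{A,I}$ together with concealment and the inclusion $A\cup I\subseteq\overline{C}\subseteq\overline{C'}$. The only differences are notational (the paper names the states $w'',v,v'$ instead of $w',u,u'$).
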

\begin{proof}
Assumption $w\Vdash \H_C\phi$ implies that there is a coalition
\begin{equation}\label{aug30-C'}
    C'\subseteq C
\end{equation}
and an action $\delta\in \Delta$ such that
for any  states $w',u,u'\in W$, 
\begin{equation}\label{aug30-long}
    w\sim_C w'\wedge (w',C',\delta,u)\in M\wedge u\sim_C u'\Rightarrow u'\Vdash \phi.
\end{equation}
Towards the proof of statement $w\Vdash \Box_{C,I}\psi$, consider any states $w'',v,v'\in W$ such that 
\begin{equation}\label{aug30-ww''}
    w\sim_{C,I} w'',
\end{equation}
\begin{equation}\label{aug30-M}
    (w'',C',\delta,v)\in M,
\end{equation}
and 
\begin{equation}\label{aug30-vv'}
    v\sim_{C,I} v'.
\end{equation}
By item 5 of Definition~\ref{sat}, it suffices to show that $v'\Vdash\psi$. 


\begin{claim}
$v\Vdash\K_C\phi$.
\end{claim}
\begin{proof-of-claim}
Consider any state $v''\in W$ such that
\begin{equation}\label{aug30-vv''}
    v\sim_C v''.
\end{equation}
By item 4 of Definition~\ref{sat}, it suffices to show that 
$v''\Vdash\phi$. Indeed, statement~(\ref{aug30-ww''}) implies that $w\sim_{C} w''$. Then, $v''\Vdash\phi$ by statement~(\ref{aug30-long}), statement~(\ref{aug30-M}), and assumption~(\ref{aug30-vv''}).
\end{proof-of-claim}

By item 4(a) of Definition~\ref{game}, the assumption~(\ref{aug30-M}) implies that $w''\sim_{\overline{C'}} v$. Thus, $w''\sim_{\overline{C}} v$ because $\overline{C}\subseteq\overline{C'}$, see assumption~(\ref{aug30-C'}). Hence, by the assumption $C\cap (I\cup A)=\varnothing$ of the lemma,
\begin{equation}\label{aug30-w''v}
    w''\sim_{A,I} v.
\end{equation}

At the same time, by item 4 of Definition~\ref{sat}, assumption  $w\Vdash \K_{C,I}\K_{A,I}(\K_C\phi\to\K_{C,I}\psi)$ of the lemma and statement~(\ref{aug30-ww''}) imply that
$$
w''\Vdash \K_{A,I}(\K_C\phi\to\K_{C,I}\psi).
$$
Then, by statement~(\ref{aug30-w''v}) and item 4 of Definition~\ref{sat},
$$
v\Vdash \K_C\phi\to\K_{C,I}\psi.
$$
Thus, 
$
v\Vdash \K_{C,I}\psi
$
by the earlier {\em Claim}, and part 3 of Definition~\ref{sat}. Therefore,
$
v'\Vdash \psi
$
by statement~(\ref{aug30-vv'}) and part 5 of Definition~\ref{sat}.
\end{proof}

\begin{lemma}
$w\nVdash \Box_C\bot$.
\end{lemma}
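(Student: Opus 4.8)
The statement $w \nVdash \Box_C\bot$ is the soundness of the Nontermination axiom $\neg\H_C\bot$, so I need to show no state of any clandestine game satisfies $\Box_C\bot$.

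\medskip

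The plan is to argue by contradiction. Suppose $w \Vdash \Box_C\bot$ for some state $w$ of a clandestine game $(W,\{\sim_a\}_{a\in\mathcal{A}},\Delta,M,\pi)$. By item~5 of Definition~\ref{sat}, there exist a nonempty coalition $C'\subseteq C$ and an operation $\delta\in\Delta$ such that for all states $w',u,u'\in W$, if $w\sim_C w'$, $(w',C',\delta,u)\in M$, and $u\sim_C u'$, then $u'\Vdash\bot$. The key point is that, unlike in Lemma~\ref{H exists lemma} where nonemptiness of the executing coalition mattered, here I only need the premises of this implication to be satisfiable for \emph{some} witnesses — and the nontermination condition~4(b) of Definition~\ref{game} delivers exactly that.

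\medskip

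Concretely, first take $w' = w$, which satisfies $w\sim_C w'$ since $\sim_C$ is reflexive (each $\sim_a$ is an equivalence relation). Then, by the nontermination condition~4(b) of Definition~\ref{game} applied to state $w'$, coalition $C'$, and operation $\delta$, there is at least one state $u\in W$ with $(w',C',\delta,u)\in M$. Finally take $u' = u$, again using reflexivity of $\sim_C$. With these choices the antecedent $w\sim_C w' \wedge (w',C',\delta,u)\in M \wedge u\sim_C u'$ holds, so we conclude $u'\Vdash\bot$, i.e. $u\Vdash\bot$. But $u\Vdash\bot$ is impossible by the semantics of $\bot$ (no state satisfies $\bot$). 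This contradiction shows $w\nVdash\Box_C\bot$.

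\medskip

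I do not anticipate a genuine obstacle here; the only thing to be careful about is invoking the right three facts in the right order — reflexivity of the indistinguishability relation to handle $w'$ and $u'$, and the nontermination condition to produce the intermediate state $u$ — and noting that the emptiness restriction on $C'$ in Definition~\ref{sat} is harmless because condition~4(b) of Definition~\ref{game} guarantees a successor state for every coalition, including $C'$.
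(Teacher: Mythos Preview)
Your proof is correct and follows essentially the same approach as the paper: assume $w\Vdash\Box_C\bot$, instantiate $w'=w$ and $u'=u$ via reflexivity of $\sim_C$, use the nontermination condition 4(b) of Definition~\ref{game} to obtain a state $u$ with $(w,C',\delta,u)\in M$, and derive the contradiction $u\Vdash\bot$. The paper's version is slightly terser (it packages the reflexivity step into an ``in particular'' specialization), but the argument is the same.
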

\begin{proof}
Suppose that $w\Vdash \Box_C\bot$. Thus, by item 5 of Definition~\ref{sat}, there is a nonempty coalition $C'\subseteq C$ and  an operation $\delta\in \Delta$ such that for any states $w',u,u'\in W$, if $w\sim_C w'$, $(w',C',\delta,u)\in M$, and $u\sim_C u'$, then $u'\Vdash \phi$. In particular, 
\begin{equation}\label{aug27-short}
    \forall u\in W( (w,C',\delta,u)\in M \Rightarrow u\Vdash \bot).
\end{equation}
By item 4(b) of Definition~\ref{game}, there is a state $v\in W$ such that $(w,C',\delta,v)\in M$. Therefore, $v\Vdash\bot$ by statement~(\ref{aug27-short}), which is a contradiction.
\end{proof}

\begin{lemma}
$w\nVdash \Box_\varnothing\phi$.
\end{lemma}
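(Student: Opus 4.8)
\noindent The plan is to observe that item~5 of Definition~\ref{sat} requires the existence of a \emph{nonempty} subcoalition $C'$ of the acting coalition, and that when the acting coalition is $\varnothing$ no such $C'$ exists. Concretely, I would argue by contradiction: suppose $w\Vdash \Box_\varnothing\phi$. Then by item~5 of Definition~\ref{sat} (applied with $C=\varnothing$) there must be a nonempty coalition $C'\subseteq\varnothing$ together with an operation $\delta\in\Delta$ witnessing the semantic condition. But the only subset of the empty set is the empty set itself, so $C'=\varnothing$, contradicting the nonemptiness requirement on $C'$. Hence no such witness exists and $w\nVdash\Box_\varnothing\phi$.

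There is essentially no obstacle here; the statement is the semantic counterpart of the Empty Coalition axiom and follows immediately from the syntactic restriction ``nonempty coalition $C'\subseteq C$'' built into the satisfiability clause for $\Box$. The only thing to be careful about is to cite item~5 of Definition~\ref{sat} and to make explicit the elementary set-theoretic fact that $\varnothing$ has no nonempty subset; everything else is a one-line contradiction. No appeal to Definition~\ref{game}, the mechanism, or the indistinguishability relations is needed.

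\begin{proof}
Suppose toward a contradiction that $w\Vdash \Box_\varnothing\phi$. Then, by item~5 of Definition~\ref{sat}, there is a nonempty coalition $C'\subseteq \varnothing$ and an operation $\delta\in\Delta$ satisfying the stated condition. However, the only subset of the empty set is the empty set itself, so $C'=\varnothing$, which contradicts the assumption that $C'$ is nonempty. Therefore, $w\nVdash \Box_\varnothing\phi$.
\end{proof}
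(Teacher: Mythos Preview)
Your proof is correct and takes essentially the same approach as the paper: both invoke item~5 of Definition~\ref{sat} and the fact that the empty set has no nonempty subset. The paper's version is a single sentence stated directly rather than by contradiction, but the argument is identical.
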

\begin{proof}
The statement of the lemma holds by item 5 of Definition~\ref{sat} because the empty set has no nonempty subsets.
\end{proof}

\begin{lemma}
If $w\Vdash \phi$ for each state $w$ of each clandestine game and coalition $C$ is nonempty, then $w\Vdash\Box_C\phi$ for each state $w$ of each clandestine game.
\end{lemma}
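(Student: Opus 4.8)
The plan is to unfold the definition of $\Box_C$ and supply the two required witnesses by brute force, exploiting the hypothesis that $\phi$ is valid across all clandestine games. Fix an arbitrary clandestine game $(W,\{\sim_a\}_{a\in\mathcal{A}},\Delta,M,\pi)$ and an arbitrary state $w\in W$. By item 5 of Definition~\ref{sat}, to establish $w\Vdash\Box_C\phi$ it suffices to produce a nonempty coalition $C'\subseteq C$ together with an operation $\delta\in\Delta$ such that, for all states $w',u,u'\in W$, whenever $w\sim_C w'$, $(w',C',\delta,u)\in M$, and $u\sim_C u'$, we have $u'\Vdash\phi$.

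First I would choose the witnesses. Since $C\neq\varnothing$ by the assumption of the lemma, we may take $C'=C$ (any nonempty subset would do), so that the requirement $\varnothing\neq C'\subseteq C$ is met. Since $\Delta\neq\varnothing$ by item 3 of Definition~\ref{game}, we may fix an arbitrary $\delta\in\Delta$. These are the only choices that need to be made.

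With these choices in hand, the implication in item 5 holds vacuously in the strongest possible sense: given any states $w',u,u'\in W$ satisfying the antecedent (indeed, even without assuming it), the conclusion $u'\Vdash\phi$ follows immediately from the hypothesis of the lemma, namely that $w\Vdash\phi$ for \emph{every} state $w$ of \emph{every} clandestine game — applied to the state $u'$ of the game under consideration. Hence $w\Vdash\Box_C\phi$, and since $w$ and the game were arbitrary, the conclusion follows.

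There is essentially no obstacle here; the proof is a matter of correctly reading off what item 5 demands and checking that the two side conditions it imposes on the witnesses — a \emph{nonempty} subcoalition and an \emph{existing} operation — are precisely guaranteed by, respectively, the nonemptiness hypothesis on $C$ in the statement and the nonemptiness of $\Delta$ in Definition~\ref{game}. The only point worth flagging in the write-up is to invoke these two nonemptiness facts explicitly, since without either one the witness could not be produced.
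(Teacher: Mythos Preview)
Your proof is correct and matches the paper's own argument essentially line for line: both take $C'=C$ as the nonempty subcoalition, pick an arbitrary $\delta\in\Delta$ using item~3 of Definition~\ref{game}, and then discharge the universal condition in item~5 of Definition~\ref{sat} directly from the validity hypothesis on $\phi$.
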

\begin{proof}
Consider any state $w$ of a clandestine game. By item 3 of Definition~\ref{game}, set $\Delta$, contains at least one element~$\delta$.

Towards the proof of $w\Vdash\Box_C\phi$, consider  any states $w',u,u'\in W$ such that $w\sim_C w'$, $(w',C,\delta,u)\in M$, and $u\sim_C u'$. By item 5 of Definition~\ref{sat}, because set $C$ is nonempty, it suffices to show that $u'\Vdash \phi$. The latter is true by the assumption of the lemma.
\end{proof}

\section{Auxiliary Lemmas}

In this section of the appendix, we prove 
Lemma~\ref{super distributivity},
Lemma~\ref{positive introspection lemma},
Lemma~\ref{KKK lemma},
Lemma~\ref{K neg K to neg H lemma}, and 
Lemma~\ref{K over vee lemma}
stated in the main part of the paper.

\begin{lemma}[deduction]\label{deduction lemma}
If $X,\phi\vdash\psi$, then $X\vdash\phi\to\psi$.
\end{lemma}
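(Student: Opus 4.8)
The statement to prove is the deduction lemma: if $X,\phi\vdash\psi$, then $X\vdash\phi\to\psi$. This is entirely standard, so let me sketch the proof.

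The plan is to proceed by induction on the length of the derivation of $\psi$ from $X\cup\{\phi\}$. Recall that, by the definition of $X\vdash$, such a derivation uses only the Modus Ponens inference rule together with the theorems of the logical system (i.e. formulas $\chi$ with $\vdash\chi$) and the members of $X\cup\{\phi\}$ as premises. Crucially, the two Necessitation rules are \emph{not} available in derivations of the form $X\vdash\cdot$, which is exactly why the deduction lemma holds in this setting.

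For the base case, suppose $\psi$ is obtained in one step: either $\psi\in X$, or $\psi=\phi$, or $\vdash\psi$. In each subcase we want $X\vdash\phi\to\psi$. If $\psi=\phi$, then $\phi\to\psi$ is the propositional tautology $\phi\to\phi$, so $\vdash\phi\to\psi$ and hence $X\vdash\phi\to\psi$. Otherwise $\psi$ is already available (it is in $X$ or is a theorem), and since $\psi\to(\phi\to\psi)$ is a propositional tautology, one application of Modus Ponens gives $X\vdash\phi\to\psi$.

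For the induction step, suppose $\psi$ is obtained by Modus Ponens from earlier formulas $\chi$ and $\chi\to\psi$, both derivable from $X\cup\{\phi\}$ by shorter derivations. By the induction hypothesis, $X\vdash\phi\to\chi$ and $X\vdash\phi\to(\chi\to\psi)$. Using the propositional tautology $(\phi\to(\chi\to\psi))\to((\phi\to\chi)\to(\phi\to\psi))$ together with two applications of Modus Ponens, we conclude $X\vdash\phi\to\psi$, completing the induction. There is no real obstacle here; the only point that requires care is noting that the Necessitation rules are excluded from $X\vdash$-derivations, so no extra cases arise.
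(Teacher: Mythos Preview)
Your proof is correct and follows essentially the same approach as the paper: induction over the derivation sequence, with the same case split (theorem, member of $X$, equal to $\phi$, or obtained by Modus Ponens) and the same propositional tautologies $\psi\to(\phi\to\psi)$ and $(\phi\to(\chi\to\psi))\to((\phi\to\chi)\to(\phi\to\psi))$. Your explicit remark that the Necessitation rules are excluded from $X\vdash$-derivations is a useful clarification that the paper leaves implicit.
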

\begin{proof}
Suppose that sequence $\psi_1,\dots,\psi_n$ is a proof from set $X\cup\{\phi\}$ and the theorems of our logical system that uses the Modus Ponens inference rule only. In other words, for each $k\le n$, either
\begin{enumerate}
    \item $\vdash\psi_k$, or
    \item $\psi_k\in X$, or
    \item $\psi_k$ is equal to $\phi$, or
    \item there are integers $i,j<k$ such that formula $\psi_j$ is equal to $\psi_i\to\psi_k$.
\end{enumerate}
It suffices to show that $X\vdash\phi\to\psi_k$ for each $k\le n$. We prove this by induction on $k$ through considering the four cases above separately.

\vspace{1mm}
\noindent{\bf Case I}: $\vdash\psi_k$. Note that $\psi_k\to(\phi\to\psi_k)$ is a propositional tautology, and thus, is an axiom of our logical system. Hence, $\vdash\phi\to\psi_k$ by the Modus Ponens inference rule. Therefore, $X\vdash\phi\to\psi_k$. 

\vspace{1mm}
\noindent{\bf Case II}: $\psi_k\in X$. Then, similar to the previous case, $X\vdash\phi\to\psi_k$.

\vspace{1mm}
\noindent{\bf Case III}: formula $\psi_k$ is equal to $\phi$. Thus, $\phi\to\psi_k$ is a propositional tautology. Therefore, $X\vdash\phi\to\psi_k$. 

\vspace{1mm}
\noindent{\bf Case IV}:  formula $\psi_j$ is equal to $\psi_i\to\psi_k$ for some $i,j<k$. Thus, by the induction hypothesis, $X\vdash\phi\to\psi_i$ and $X\vdash\phi\to(\psi_i\to\psi_k)$. Note that formula 
$$
(\phi\to\psi_i)\to((\phi\to(\psi_i\to\psi_k))\to(\phi\to\psi_k))
$$
is a propositional tautology. Therefore, $X\vdash \phi\to\psi_k$ by applying the Modus Ponens inference rule twice.
\end{proof}

\renewcommand*{\thelemma}{\ref{super distributivity}}
\begin{lemma}
If $\phi_1,\!\dots,\!\phi_n\vdash\psi$, then $\K_a\phi_1,\!\dots,\!\K_a\phi_n\vdash\K_a\psi$.
\end{lemma}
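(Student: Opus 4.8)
The statement to prove is Lemma~\ref{super distributivity}: if $\phi_1,\dots,\phi_n\vdash\psi$, then $\K_a\phi_1,\dots,\K_a\phi_n\vdash\K_a\psi$ (here stated with generic modality $\K_a$, matching the coalition version $\K_C$ used in the body).

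\textbf{Plan.} The plan is to proceed by induction on $n$, the number of premises, using the Deduction Lemma (Lemma~\ref{deduction lemma}) to collapse the hypotheses into a single implication, then push a $\K$ through that implication via the Distributivity axiom.

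\textbf{Base case.} First I would handle $n=0$: if $\vdash\psi$, then $\vdash\K_a\psi$ by the Necessitation inference rule for $\K$, so $\K_a\psi$ is derivable from no premises, as required. (One may equally take $n=1$ as the base: from $\phi_1\vdash\psi$ the Deduction Lemma gives $\vdash\phi_1\to\psi$, hence $\vdash\K_a(\phi_1\to\psi)$ by Necessitation, hence $\K_a\phi_1\vdash\K_a\psi$ by the Distributivity axiom and Modus Ponens.)

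\textbf{Inductive step.} Suppose the claim holds for $n-1$ premises, and assume $\phi_1,\dots,\phi_n\vdash\psi$. By the Deduction Lemma applied to the last premise, $\phi_1,\dots,\phi_{n-1}\vdash\phi_n\to\psi$. By the induction hypothesis, $\K_a\phi_1,\dots,\K_a\phi_{n-1}\vdash\K_a(\phi_n\to\psi)$. The Distributivity axiom $\K_a(\phi_n\to\psi)\to(\K_a\phi_n\to\K_a\psi)$ together with two applications of Modus Ponens then yields $\K_a\phi_1,\dots,\K_a\phi_{n-1},\K_a\phi_n\vdash\K_a\psi$, completing the induction.

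\textbf{Main obstacle.} There is no real obstacle here; the only thing to be careful about is the bookkeeping of what "$X\vdash\phi$" means, namely that after introducing the modalized premises one is allowed to use every theorem of the system (including instances of the Distributivity axiom) together with Modus Ponens, which is precisely the notion fixed in the body of the paper. The Deduction Lemma is exactly the tool that lets us convert the object-level entailment $\phi_1,\dots,\phi_n\vdash\psi$ into a theorem to which Necessitation can be applied, so the proof is essentially an assembly of Lemma~\ref{deduction lemma}, the Necessitation rule, and the Distributivity axiom.
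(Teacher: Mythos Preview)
Your proof is correct and is essentially the same as the paper's argument: the paper applies the Deduction Lemma $n$ times up front to obtain $\vdash\phi_1\to(\phi_2\to\dots(\phi_n\to\psi)\dots)$, applies Necessitation once, and then peels off the premises one by one with Distributivity and Modus Ponens, finishing with the phrase ``by applying the previous steps $(n-1)$ more times.'' Your explicit induction on $n$ is just the recursive packaging of that same sequence of steps; when unrolled, it yields precisely the paper's derivation.
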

\begin{proof}
By Lemma~\ref{deduction lemma} applied $n$ times, the assumption $\phi_1,\dots,\phi_n\vdash\psi$ implies that
$$
\vdash\phi_1\to(\phi_2\to\dots(\phi_n\to\psi)\dots).
$$
Thus, 
$$
\vdash\K_a(\phi_1\to(\phi_2\to\dots(\phi_n\to\psi)\dots))
$$
by the Necessitation inference rule.
Hence, by the Distributivity axiom and the Modus Ponens inference rule,
$$
\vdash\K_a\phi_1\to\K_a(\phi_2\to\dots(\phi_n\to\psi)\dots).
$$
Then, 
$
\K_a\phi_1\vdash\K_a(\phi_2\to\dots(\phi_n\to\psi)\dots)
$,
again by the Modus Ponens inference rule.
Therefore, $\K_a\phi_1,\dots,\K_a\phi_n\vdash\K_a\psi$ by applying the previous steps $(n-1)$ more times.
\end{proof}

\renewcommand*{\thelemma}{\ref{positive introspection lemma}}
\begin{lemma}[positive introspection]
$\vdash \K_a\phi\to\K_a\K_a\phi$. 
\end{lemma}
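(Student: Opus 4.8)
The plan is to derive positive introspection from the Truth and Negative Introspection axioms in the usual way for an S5-style modality; since everything takes place at a single fixed coalition $a$, the distributed character of $\K_a$ plays no role, and the argument is identical to the classical one for individual knowledge.

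First I would record three implications. (i) From the instance $\K_a\neg\K_a\phi\to\neg\K_a\phi$ of the Truth axiom, propositional reasoning (contraposition) gives $\vdash\K_a\phi\to\neg\K_a\neg\K_a\phi$. (ii) Instantiating the Negative Introspection axiom with $\neg\K_a\phi$ in place of $\phi$ gives $\vdash\neg\K_a\neg\K_a\phi\to\K_a\neg\K_a\neg\K_a\phi$. (iii) The contrapositive of the Negative Introspection axiom is $\vdash\neg\K_a\neg\K_a\phi\to\K_a\phi$; applying the Necessitation rule for $\K$ to this theorem, then the Distributivity axiom together with Modus Ponens, yields $\vdash\K_a\neg\K_a\neg\K_a\phi\to\K_a\K_a\phi$ (alternatively, this last step is an immediate application of Lemma~\ref{super distributivity}).

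Finally I would chain (i), (ii), and (iii) by hypothetical syllogism,
\[
\K_a\phi\to\neg\K_a\neg\K_a\phi\to\K_a\neg\K_a\neg\K_a\phi\to\K_a\K_a\phi,
\]
which is exactly the desired conclusion $\vdash\K_a\phi\to\K_a\K_a\phi$. There is no genuine obstacle here: once the three implications are in place the rest is routine propositional bookkeeping. The only point requiring a little care is keeping the nested occurrences of $\neg\K_a$ correctly bracketed when instantiating Negative Introspection in step (ii) and when moving a $\K_a$ across the implication in step (iii).
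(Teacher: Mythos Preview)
Your proof is correct and follows essentially the same route as the paper: the paper also contraposes the Truth axiom to get $\K_a\phi\to\neg\K_a\neg\K_a\phi$, combines this with the Negative Introspection instance $\neg\K_a\neg\K_a\phi\to\K_a\neg\K_a\neg\K_a\phi$, and then uses the contrapositive of Negative Introspection lifted through Necessitation and Distributivity to obtain $\K_a\neg\K_a\neg\K_a\phi\to\K_a\K_a\phi$, chaining the three implications exactly as you do.
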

\begin{proof}
Formula $\K_a\neg\K_a\phi\to\neg\K_a\phi$ is an instance of the Truth axiom. Thus, by contraposition, $\vdash \K_a\phi\to\neg\K_a\neg\K_a\phi$. Hence, taking into account the following instance of  the Negative Introspection axiom: $\neg\K_a\neg\K_a\phi\to\K_a\neg\K_a\neg\K_a\phi$,
\begin{equation}\label{pos intro eq 2}
\vdash \K_a\phi\to\K_a\neg\K_a\neg\K_a\phi.
\end{equation}

At the same time, $\neg\K_a\phi\to\K_a\neg\K_a\phi$ is an instance of the Negative Introspection axiom. Thus, $\vdash \neg\K_a\neg\K_a\phi\to \K_a\phi$ by contraposition. Hence, by the Necessitation inference rule, 
$\vdash \K_a(\neg\K_a\neg\K_a\phi\to \K_a\phi)$. Thus, by  the Distributivity axiom and the Modus Ponens inference rule, 
$
  \vdash \K_a\neg\K_a\neg\K_a\phi\to \K_a\K_a\phi.
$
The latter, together with statement~(\ref{pos intro eq 2}), implies the statement of the lemma by propositional reasoning.
\end{proof}

\noindent{\bf Lemma~\ref{KKK lemma}}
{\em $\vdash \K_F\K_{E}\K_F\phi\to \H_F\phi$, where $F\nsubseteq E$.}

\begin{proof}
Note that sets $F\setminus E$, $F\cap E$, and $E\setminus F$ are pairwise disjoint for any sets $E$ and $F$. Then, by the Coalition-Informant-Adversary axiom, where $C=F\setminus E$, $I=F\cap E$, and $A=E\setminus F$,
\begin{eqnarray*}
&&\vdash\K_{F\setminus E,F\cap E}\K_{E\setminus F,F\cap E}(\K_{F\setminus E}\top\to\K_{F\setminus E,F\cap E}\phi)\\
&&\hspace{5mm}\to(\H_{F\setminus E}\top\to\H_{F\setminus E,F\cap E}\phi).
\end{eqnarray*}
In other words, 
\begin{eqnarray}
\vdash\K_{F}\K_{E}(\K_{F\setminus E}\top\to\K_{F}\phi)
\to(\H_{F\setminus E}\top\to\H_{F}\phi).\label{hold me}
\end{eqnarray}
At the same time, formula
$
\K_{F}\phi\to
(\K_{F\setminus E}\top\to\K_{F}\phi)
$
is a propositional tautology. Thus, by the Necessitation inference rule,
$
\vdash\K_{E}(\K_{F}\phi\to
(\K_{F\setminus E}\top\to\K_{F}\phi))
$.
Then, by the Distributivity axiom and the Modus Ponens inference rule,
$
\vdash\K_{E}\K_{F}\phi\to
\K_{E}(\K_{F\setminus E}\top\to\K_{F}\phi)
$.
Hence, again by the Necessitation inference rule,
$$
\vdash\K_{F}(\K_{E}\K_{F}\phi\to
\K_{E}(\K_{F\setminus E}\top\to\K_{F}\phi)).
$$
Thus, by the Distributivity axiom and the Modus Ponens inference rule,
$
\vdash \K_{F}\K_{E}\K_{F}\phi\to
\K_{F}\K_{E}(\K_{F\setminus E}\top\to\K_{F}\phi).
$
Then, 
$
\vdash\K_{F}\K_{E}\K_{F}\phi\to(\H_{F\setminus E}\top\to\H_{F}\phi)
$
by the laws of propositional reasoning using statement~(\ref{hold me}).
Finally, note that assumption $F\nsubseteq E$ implies that set $F\setminus E$ is not empty. Hence,  $\vdash \H_{F\setminus E}\top$ by the Necessitation inference rule. Therefore,
$
\vdash\K_{F}\K_{E}\K_{F}\phi\to\H_{F}\phi
$
by the laws of propositional reasoning.
\end{proof}

\noindent{\bf Lemma~\ref{K neg K to neg H lemma}}
{\em $\vdash \K_E\neg\K_F\phi\to \neg\H_F\phi$, where $E\cap F=\varnothing$.
}
\begin{proof}
Assumption $E\cap F=\varnothing$ implies that set $F$ and set $E\cup\varnothing$ are disjoint. Then, by the Coalition-Informant-Adversary axiom, where
$C= F$, $I = E$, $A=\varnothing$, and $\psi=\bot$,
\begin{equation}\label{hold 2}
  \vdash \K_{F,E}\K_{E}(\K_F\phi\to\K_{F,E}\bot)\to(\H_F\phi\to\H_{F,E}\bot).  
\end{equation}
At the same time, the formula
$
\neg\K_F\phi\to (\K_F\phi\to\K_{F,E}\bot)
$
is a tautology. Thus, 
$
\vdash \K_E(\neg\K_F\phi\to (\K_F\phi\to\K_{F,E}\bot))
$
by the Necessitation inference rule. Hence, by the Distributivity axiom and the Modus Ponens inference rule,
$$
\vdash \K_E\neg\K_F\phi\to \K_E(\K_F\phi\to\K_{F,E}\bot).
$$
Then, 
$
\vdash \K_E\neg\K_F\phi\to \K_E\K_E(\K_F\phi\to\K_{F,E}\bot)
$
using propositional reasoning and Lemma~\ref{positive introspection lemma}.
Thus, using the Monotonicity axiom and propositional reasoning,
$$
\vdash \K_E\neg\K_F\phi\to \K_{F,E}\K_E(\K_F\phi\to\K_{F,E}\bot).
$$
Hence, 
$
\vdash \K_E\neg\K_F\phi\to(\H_F\phi\to\H_{F,E}\bot) 
$
using statement~(\ref{hold 2}) and propositional reasoning.
Finally, note that $\neg\H_{F,E}\bot$ is an instance of the Nontermination axiom. Therefore,
$
\vdash \K_E\neg\K_F\phi\to\neg\H_F\phi  
$
by the laws of propositional reasoning.
\end{proof}

\noindent{\bf Lemma~\ref{K over vee lemma}}
{\em 
$\vdash\K_F(\K_E\phi\vee\psi)\to\K_E\phi\vee \K_F\psi$, where $E\subseteq F$.
}


\begin{proof}
Note that $\K_E\phi\vee\psi\to(\neg\K_E\phi\to\psi)$ is a propositional tautology. Thus, $\vdash\K_F(\K_E\phi\vee\psi\to(\neg\K_E\phi\to\psi))$ by the Necessitation inference rule. Hence, by the Distributivity axiom and the Modus Ponens inference rule,
$$
\vdash\K_F(\K_E\phi\vee\psi)\to\K_F(\neg\K_E\phi\to\psi).
$$
Then, 
$
    \K_F(\K_E\phi\vee\psi)\vdash \K_F(\neg\K_E\phi\to\psi)
$
by the Modus Ponens inference rule.
Thus, by the Distributivity axiom and the Modus Ponens,
$
    \K_F(\K_E\phi\vee\psi)\vdash \K_F\neg\K_E\phi\to\K_F\psi.
$
Hence,
$
    \K_F(\K_E\phi\vee\psi)\vdash \K_E\neg\K_E\phi\to\K_F\psi
$
by the Monotonicity axiom, assumption $E\subseteq F$, and the laws of propositional reasoning.
Then, by the Negative Introspection axiom and propositional reasoning,
$
    \K_F(\K_E\phi\vee\psi)\vdash \neg\K_E\phi\to\K_F\psi.
$
Thus, by  propositional reasoning,
$
    \K_F(\K_E\phi\vee\psi)\vdash \K_E\phi\vee\K_F\psi.
$
Therefore, $\vdash\K_F(\K_E\phi\vee\psi)\to\K_E\phi\vee \K_F\psi$ by the deduction theorem.
\end{proof}

\section{Completeness}


\noindent{\bf Lemma~\ref{induction lemma}}
{\em 
$w\Vdash \phi$ iff $\phi\in hd(w)$.
}
\vspace{0mm}

\begin{proof}
We prove the statement of the lemma by structural induction on formula $\phi$. If $\phi$ is a propositional variable, then the statement of the lemma follows from item 1 of Definition~\ref{sat} and Definition~\ref{canonical pi}. If formula $\phi$ is a negation or an implication, then the required follows from items 2 and 3 of Definition~\ref{sat} and the maximality and the consistency of the set $hd(w)$ in the standard way.

Suppose that formula $\phi$ has the form $\K_C\psi$.

\noindent$(\Rightarrow)$: If $\K_C\psi\notin hd(w)$, then, by Lemma~\ref{K exists lemma}, there  is a state $u\in W$ such that $w\sim_C u$ and $\psi\notin hd(u)$. Thus, $u\nVdash\psi$ by the induction hypothesis. Therefore, $w\nVdash\K_C\psi$ by item 4 of Definition~\ref{sat}.

\noindent$(\Leftarrow)$: If $\K_C\psi\in hd(w)$, then $\psi\in hd(u)$ for each state $u\in W$ such that $w\sim_C u$ by Lemma~\ref{K all lemma}. Hence, by the induction hypothesis, $u\Vdash \psi$ for each state $u\in W$ such that $w\sim_C u$. Therefore, $w\Vdash\K_C\psi$ by item 4 of Definition~\ref{sat}.

Assume that formula $\phi$ has the form $\H_C\psi$.

\noindent$(\Rightarrow)$: If $\H_C\psi\notin hd(w)$, then, by Lemma~\ref{H exists lemma}, for any nonempty coalition $C'\subseteq C$ and any action $\delta\in \Delta$, there  are states $w',u,u'$ such that $w\sim_{C} w'$, $(w',C',\delta,u)\in M$, $u\sim_C u'$, and $\psi\notin hd(u')$. Thus, by the induction hypothesis, for any nonempty coalition $C'\subseteq C$ and any action $\delta\in \Delta$, there  are states $w',u,u'$ such that $w\sim_{C} w'$, $(w',C',\delta,u)\in M$, $u\sim_C u'$, and $u'\nVdash\psi$. Therefore, $w\nVdash\H_C\psi$ by item 5 of Definition~\ref{sat}.

\noindent$(\Leftarrow)$: Let $\H_C\psi\in hd(w)$. Thus, by Lemma~\ref{H all lemma},  for any three states $w',u,u'\in W$, if $w\sim_C w'$, $(w',C,\psi,u)\in M$, and $u\sim_C u'$, then $\phi\in hd(u')$. Hence, by the induction hypothesis,  for any three states $w',u,u'\in W$, if $w\sim_C w'$, $(w',C,\psi,u)\in M$, and $u\sim_C u'$, then $u'\Vdash \psi$. Therefore, $w\Vdash\H_C\psi$ by item 5 of Definition~\ref{sat}.
\end{proof}

\end{document}